\theoremstyle{plain}
\newtheorem{theorem}{Theorem}
\newtheorem{lemma}[theorem]{Lemma}
\newtheorem{corollary}[theorem]{Corollary}
\theoremstyle{definition}
\newtheorem{construction}{Construction}
\newtheorem{remark}{Remark}
\begin{document}
\title{New Centralized MSR Codes With Small Sub-packetization}
\author{ Yaqian Zhang

\thanks{Yaqian Zhang is with School of Electronic Information and Electrical Engineering, Shanghai Jiao Tong University, Shanghai 200240, China, e-mail: zhangyq9@sjtu.edu.cn}

% <-this % stops a space
}
\maketitle

\begin{abstract}
Centralized repair refers to repairing $h\geq 2$ node failures using $d$ helper nodes in a centralized way, where the repair bandwidth is counted by the total amount of data downloaded from the helper nodes.
A centralized MSR code is an MDS array code with $(h,d)$-optimal repair for some $h$ and $d$.
In this paper, we present several classes of centralized MSR codes with small sub-packetization.
At first, we construct an alternative MSR code with $(1,d_i)$-optimal repair for multiple repair degrees $d_i$ simultaneously.
Based on the code structure, we are able to construct a centralized MSR code with $(h_i,d_i)$-optimal repair property for all possible $(h_i,d_i)$ with $h_i\mid (d_i-k)$ simultaneously. The sub-packetization is no more than ${\rm lcm}(1,2,\ldots,n-k)(n-k)^n$, which is much smaller than a previous work given by Ye and Barg ($({\rm lcm}(1,2,\ldots,n-k))^n$).
Moreover, for general parameters $2\leq h\leq n-k$ and $k\leq d\leq n-h$, we further give a centralized MSR code enabling $(h,d)$-optimal repair with sub-packetization smaller than all previous works.
\end{abstract}

\begin{IEEEkeywords}
Distributed storage system, MSR codes, MDS array codes, centralized repair, sub-packetization level
\end{IEEEkeywords} \IEEEpeerreviewmaketitle
\section{Introduction}
%\IEEEPARstart{T}{he}
%\IEEEPARstart{T}{he}
Erasure codes are extensively used in large-scale distributed storage systems to protect the data reliability. Typically, a data file is encoded into a codeword and stored across $n$ storage nodes with each node storing one coordinate. Each node may contain an array of symbols, the size of which is called {\it node capacity} or {\it sub-packetization}.
The system should satisfy that any $k$ nodes are able to retrieve the original file.
When some node fails, the failed node should be repaired by downloading data from some surviving nodes ( helper nodes), where the number of helper nodes is called {\it repair degree}. An important metric to measure the repair efficiency is called {\it repair bandwidth}, i.e. the total amount of data downloaded during the repair process.
%Practically, MDS (Maximum Distance Separable) codes are usually used since they provide the optimal tradeoff between storage overhead and fault tolerance, such as Facebook's f4 storage system \cite{f4}, etc. However, naive repair of MDS codes requires expensive repair bandwidth which equals the whole data file size.
In a remarkable work \cite{Dimakis2011}, Dimakis et al. derived a cut-set bound that gives a tradeoff between the storage overhead and repair bandwidth.
Codes with parameters lying on this tradeoff curve are called regenerating codes.
An extreme point of the tradeoff curve gives a subclass of regenerating codes, called minimum storage regenerating (MSR) codes, which achieve the optimal tradeoff between redundancy and reliability.

MSR codes belong to a subclass of MDS codes, namely, MDS array codes \cite{arraycode}.
An $(n,k,\ell)$ MDS array code is formed by a set of length-$n\ell$ vectors $(\bm c_1,...,\bm c_n)\in F^{n\ell}$, where $F$ is the defining field of the code and  $\ell$ is the sub-packetization level.
It satisfies that any $k$ coordinates can be seen as information coordinates and
each coordinate $\bm c_i\in F^\ell$ is stored in one storage node $i$ for $i\in[n]$.
Indeed, MDS array codes with repair bandwidth achieving the cut-set bound indicated by \cite{Dimakis2011} are exactly MSR codes.

MSR codes typically deal with single node failures and have been extensively studied in the literature \cite{Cadambe2013,Rashmi,zigzag,Wang2016,Ye-highrate,Liu-2023-multi-d,Ye-access,Li-2017,Liu,Rawat,Li2021,Guruswami-2020}, wherein \cite{Ye-access,Li-2017,Liu} additionally consider the I/O cost, \cite{Rawat,Li2021,Guruswami-2020} focus on tradeoff between sub-packetization and repair bandwidth, and codes with multiple repair degrees are given in \cite{Ye-highrate,Liu-2023-multi-d}.
Besides, inspired by the idea of ``vectorization'', repairing scalar MDS codes ($\ell=1$), such as Reed-Solomon codes, is studied in \cite{Guruswami-rs, scalarMDS} to reduce the repair bandwidth.

In some practical scenarios, repairing multiple node failures is the usual case, such as Total Recall \cite{TotalRecall}. There are mainly two models for multi-node repair. One is the centralized repair model \cite{Wang-centralized}, where it assumes a data center to complete all repairs and the repair bandwidth measures the total amount of data downloaded from the helper nodes. The other is the cooperative repair model \cite{Hu-cooperative}, where the failed nodes are repaired in a distributed and cooperative way and the repair bandwidth includes both downloaded data from helper nodes and communicated data between failed nodes.
Cut-set bounds for MDS array codes under the two repair models are respectively obtained in \cite{Wang-centralized} and \cite{Hu-cooperative,Shum}, and MDS array codes with repair bandwidth achieving the corresponding cut-set bounds are called {\it centralized MSR codes} and {\it cooperative MSR codes}. In this paper,
we focus on centralized MSR codes.

In the centralized repair, suppose $h$ ($2\leq h\leq n-k$) nodes indexed by $\mathcal{H}\subseteq[n]$ with $|\mathcal{H}|=h$ are failed, i.e., the data $\bm{c}_i, i\in\mathcal{H}$ are erased. Then there assumes a data center to repair all the failed nodes. More precisely, the data center connects to any $d$ ($k\leq d\leq n-h$) helper nodes indexed by $\mathcal{R}\subseteq[n]\setminus\mathcal{H}$ with $|\mathcal{R}|=d$, and downloads $\beta(\mathcal{H},\mathcal{R})$ symbols in $F$ from each of the $d$ helper nodes. Then the repair bandwidth $\gamma(\mathcal{H},\mathcal{R})$ is counted by the total amount of data downloaded from the helper nodes by the data center. That is, $\gamma(\mathcal{H},\mathcal{R})=d\cdot\beta(\mathcal{H},\mathcal{R})$.
The cut-set bound \cite{Wang-centralized} indicates that
\begin{equation}\label{cut-set-bound}
\beta(\mathcal{H},\mathcal{R})\geq\frac{h\ell}{d-k+h}, ~~~~ \gamma(\mathcal{H},\mathcal{R})\geq\frac{dh\ell}{d-k+h}.
\end{equation}
If an MDS array code $\mathcal{C}$ achieves the cut-set bound in (\ref{cut-set-bound}) with equality for any $h$-subset $\mathcal{H}\subseteq[n]$ and any $d$-subset $\mathcal{R}\subseteq[n]\setminus\mathcal{H}$, we say $\mathcal{C}$ has $(h,d)$-optimal repair property, and $\mathcal{C}$ is a centralized MSR code.
In particular, when $h=1$, the code $\mathcal{C}$ with $(1,d)$-optimal repair property is an MSR code.

\subsection{Related works}
In the literature, the work \cite{Wang-centralized} presented a centralized repair scheme for MDS array codes which is defined over a large enough finite filed and the repair is for special values of $d$. Authors of \cite{Wang2019} developed a transformation from codes with single node repair to that with multiple failures, and presented product-matrix codes with optimal repair of low code rate.
For any $2\leq h\leq n-k$, $k\leq d\leq n-h$ and high code rate, an excellent work by Ye and Barg \cite{Ye-highrate} gave an explicit construction of centralized MSR codes with sub-packetization $({\rm lcm}(d-k+1,d-k+2,\ldots,d-k+h))^n$. They also extend it to a centralized MSR code with sub-packetization $({\rm lcm}(1,2,\ldots,n-k))^n$  possessing the $(h_i,d_i)$-optimal repair property for all $2\leq h_i\leq n-k$ and $k\leq d_i\leq n-h_i$ simultaneously.
In a recent work \cite{Li-arxiv}, Li et al. gave a construction of centralized MSR code to reduce the sub-packetization to $(\frac{d-k+h}{h})^n$ when $h\mid(d-k)$.
On the other hand, the work \cite{Ye-2018} proved MDS array codes with optimal repair bandwidth in cooperative repair must achieve optimal repair bandwidth in centralized repair. That is, a cooperative MSR code can be easily converted to a centralized one.
We list two best results of cooperative MSR codes as follows. The work \cite{Ye2020} presented cooperative MSR codes with sub-packetization $(d-k+h)(d-k+1)^n$. When $d=k+1$, authors of \cite{Liu2023} proved that the sub-packetization can be reduced to $2^n$ when $(h+1)\mid2^n$ and $(2\rho+1)2^n$ when $h+1=(2\rho+1)2^t$ for some integers $\rho$ and $t$.

\subsection{Our contribution}

We present new constructions of centralized MSR codes with small sub-packetization.
At first, we give an alternative MSR code with $(1,d_i)$-optimal repair for multiple repair degrees $d_i$ simultaneously.
Despite that the sub-packetization of the code is larger than the best known result given in \cite{Liu-2023-multi-d}, however, it can be used to construct codes with multiple node erasures.
More specifically, based on the code structure, we can construct a centralized MSR code with $(h_i,d_i)$-optimal repair for all possible $(h_i,d_i)$ with $h_i\mid (d_i-k)$ simultaneously. The sub-packetization is no more than ${\rm lcm}(1,2,\ldots,n-k)(n-k)^n$, which is much smaller than that in the work \cite{Ye-highrate}.
Moreover, for general parameters $2\leq h\leq n-k$ and $k\leq d\leq n-h$, we further give a centralized MSR code enabling $(h,d)$-optimal repair with sub-packetization smaller than the works \cite{Ye-highrate, Ye2020, Li-arxiv}. In particular, the works \cite{Li-arxiv} and \cite{Ye2020} can be seen as two extreme cases of our code constructions.
Besides, for special values of $h$ and $d$ with $(d-k)\mid h$  and $(\frac{h}{d-k}+1)\mid 2^n$, we show that the sub-packetization can be further reduced.
Finally, the code is extended to all possible $h$ and $d$ simultaneously.
Comparisons of our codes and previous works are given in Table \ref{tab1}.

\begin{table*}
\begin{center}
\caption{High rate centralized MSR codes with $(h,d)$-optimal repair.}
\label{tab1}
\begin{tabular}{| c | c | c |}
\hline
Ref. & repair pattern $(h,d)$ ($2\leq h\leq n-k$, $k\leq d\leq n-h$) & sub-packetization $\ell$  \\ \hline
\cite{Ye-highrate} & a pair of $(h,d)$ & $({\rm lcm}(d-k+1,d-k+2,\ldots,d-k+h))^n$  \\
\hline
\cite{Ye2020} & a pair of  $(h,d)$ & $(d-k+h)(d-k+1)^n$  \\
\hline
\cite{Li-arxiv} & a pair of  $(h,d)$ with $h\mid (d-k)$ & $(\frac{d-k+h}{h})^n$  \\
\hline
\cite{Ye-highrate} & pairs of $(h,d)$ for all possible $h$ and $d$ & $({\rm lcm}(1,2,\ldots,n-k))^n$  \\
\hline
This paper, Theorem \ref{thm3}& a pair of  $(h,d)$ & $(\frac{d-k+h}{\delta})(\frac{d-k+\delta}{\delta})^n$ where $\delta={\rm gcd}(h,d-k)$  \\
\hline
This paper, Theorem \ref{thm-hadamard}& a pair of  $(h,d)$, $(d-k)\mid h$  and $(\frac{h}{d-k}+1)\mid 2^n$ & $2^n$  \\
\hline
This paper, Corollary \ref{cor} & pairs of $(h,d)$ for all possible $h$, $d$ with $h\mid(d-k)$ & ${\rm lcm}(1,2,\ldots,n-k)\cdot(n-k)^n$  \\
\hline
This paper, Corollary \ref{cor2} & pairs of $(h,d)$ for all possible $h$, $d$ & ${\rm lcm}(1,2,\ldots,n-k)\cdot(n-k)^n$  \\
\hline
\end{tabular}
\end{center}
\end{table*}

\subsection{Organization}
The rest of the paper is organized as follows. Section \ref{sec-code1} gives an alternative MSR code with multiple repair degrees. Section \ref{sec-code2} presents a centralized MSR code enabling $(h_i,d_i)$-optimal repair for multiple $h_i, d_i$ with $h_i\mid(d_i-k)$ simultaneously. Section \ref{sec-code3} gives a centralized MSR code enabling $(h,d)$-optimal repair for a single pair of $(h,d)$ with general parameters. Section \ref{sec-code4} further extends it to all possible parameters $h$ and $d$ simultaneously.

\section{An alternative construction of MSR codes with multiple repair degrees}\label{sec-code1}
Throughout this paper, we use $[n]$ to denote the set of integers $\{1,2,...,n\}$ for a positive integer $n$, and denote $[i,j]=\{i,i+1,...,j\}$ for two integers $i<j$.
Let $k<d_1<d_2<\cdots<d_m\leq n-1$, we give a new construction of MSR codes with $(1,d_i)$-optimal repair for multiple $d_i$, $i\in[m]$, simultaneously. Denote $s_i=d_i-k+1$ for $i\in[m]$ and $s={\rm lcm} (s_1,s_2,\ldots,s_{m-1})$. The new code $\mathcal{C}_1$ is an $(n,k,\ell=s\cdot s_m^n)$ MDS array code over a finite field $F$. For each codeword $\bm{c}\in\mathcal{C}_1$, write $\bm{c}=(\bm{c}_1,\bm{c}_2,\ldots,\bm{c}_n)$, where $\bm{c}_i=(c_{i,0},\ldots,c_{i,\ell-1})\in F^\ell$ for $i\in[n]$.

$\forall a\in[0,s_m^n-1]$,  write $a=(a_1,a_2,\ldots,a_n)$ where $a$ has the $s_m$-ary expansion $\sum_{i=1}^n a_i\cdot s_m^{i-1}$ with $a_{i}\in[0,s_m-1]$.
Given $a\in[0,s_m^n-1]$, for some $i\in[n]$ and $v\in[0,s_m-1]$, define $a(i,v)$ to be the integer with representation $(a_1,\ldots,a_{i-1},v,a_{i+1},\ldots,a_{n})$.
More generally, for a subset $\mathcal{X}\subseteq[n]$ and a vector $\bm{v}\in[0,s_m-1]^{|\mathcal{X}|}$, define $a(\mathcal{X},\bm{v})$ to be the integer $a'$ with $a'|_{\mathcal{X}}=\bm{v}$ and $a'|_{[n]\setminus\mathcal{X}}=a|_{[n]\setminus\mathcal{X}}$ where $a'|_{\mathcal{X}}$ denotes the vector representation of $a'$ restricted on the digits in $\mathcal{X}$.
For every integer $\tau\in[0,\ell-1]$,  $\tau$ has a unique representation as $\tau=b\cdot s_m^n+\sum_{i=1}^n a_i\cdot s_m^{i-1}$ with $b\in[0,s-1]$ and $a_{i}\in[0,s_m-1]$ for $i\in[n]$. Then we can write $\tau=(a_1,a_2,\ldots,a_n,b)=(a,b)$ where $a=(a_1,a_2,\ldots,a_n)\in[0,s_m^n-1]$. In the following, we use pairs $(a,b)$ to represent integers in $[0,\ell-1]$ and use the notation $\oplus$ to denote addition modulo $s_m$, that is, $x\oplus y=x+y~({\rm mod}~s_m)$.

\begin{construction}\label{construction1}
\it
Let $F$ be a finite field with $|F|\geq s_mn$, and $\lambda_{i,j},i\in[n],j\in[0,s_m-1]$ be $s_mn$ distinct elements in $F$.
	The $(n,k,l=s\cdot s_m^n)$ MDS array code $\mathcal{C}_1$ is defined by the following $r\ell$ parity check equations
\begin{equation}\label{def-C1}
\sum_{i=1}^n\lambda_{i,a_i}^{t-1}c_{i,(a,b)}=0,~~a\in[0,s_m^n-1], b\in[0,s-1], t\in[r],
\end{equation}
where $r=n-k$.
\end{construction}

It is easy to see that for every $(a,b)\in[0,\ell-1]$, the vector $(c_{1,(a,b)},c_{2,(a,b)},\ldots,c_{n,(a,b)})$ forms an $[n,k]$ generalized Reed-Solomon (GRS) codeword. Thus $\mathcal{C}_1$ is an $(n,k,\ell)$ MDS array code.
Next we show the repair property of $\mathcal{C}_1$.

\begin{theorem}\label{thm1}
$\mathcal{C}_1$ satisfies the $(1,d_i)$-optimal repair property for all $d_i$, $i\in[m]$.
\end{theorem}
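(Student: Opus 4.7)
The plan is to fix a failed node $i^* \in [n]$ and a repair degree $d_i$ (with $s_i = d_i - k + 1$), and exhibit an explicit repair scheme in which each of $d_i$ chosen helpers downloads $\ell/s_i$ scalars, sufficient to recover the $\ell$ symbols of node $i^*$. The core idea is to group the pairs $(a_{i^*}, b) \in [0, s_m - 1] \times [0, s - 1]$, for each fixing of the remaining digits $a_j$ ($j \neq i^*$), into blocks of size $s_i$ whose $a_{i^*}$-coordinates are pairwise distinct; each helper then downloads a single linear combination per block.

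For the largest degree $i = m$ I would take the natural partition $S_u = [0, s_m - 1] \times \{u\}$, $u \in [0, s - 1]$. For $i < m$ I exploit $s_i \mid s$ (which holds since $s = \mathrm{lcm}(s_1, \ldots, s_{m-1})$) to define
\[
S_{u,v} = \bigl\{\bigl((v + \delta) \bmod s_m,\; u + \delta \cdot (s/s_i)\bigr) : \delta \in [0, s_i - 1] \bigr\},\quad u \in [0, s/s_i - 1],\; v \in [0, s_m - 1];
\]
a direct check shows that the $S_{u,v}$ partition $[0, s_m-1] \times [0, s-1]$ into $s_m s / s_i$ blocks of size $s_i$, each with $s_i$ distinct $a_{i^*}$-coordinates (using $s_i < s_m$). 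The download at helper $j \in \mathcal{R}$ corresponding to a block $S$ and a fixing of $a_{[n] \setminus \{i^*\}}$ would then be the single scalar $\sum_{(v, u) \in S} c_{j, (a(i^*, v), u)}$, giving $s_m^{n-1} \cdot (s_m s / s_i) = \ell/s_i$ downloads per helper, matching the cut-set bound (\ref{cut-set-bound}).

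To verify recoverability, I would sum the parity-check equations (\ref{def-C1}) at the indices $(a(i^*, v), u)$ over $(v, u) \in S$: for each $t \in [r]$ this yields
\[
\sum_{(v, u) \in S} \lambda_{i^*, v}^{t-1} c_{i^*, (a(i^*, v), u)} + \sum_{j \neq i^*} \lambda_{j, a_j}^{t-1} \sum_{(v, u) \in S} c_{j, (a(i^*, v), u)} = 0.
\]
This is an $r \times r$ linear system in the $s_i$ node-$i^*$ unknowns together with the $r - s_i = n - 1 - d_i$ aggregate sums at the non-helper nodes $j \notin \mathcal{R} \cup \{i^*\}$ (aggregate sums at helpers being known downloads). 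Its coefficient matrix is Vandermonde in the $r$ values $\{\lambda_{i^*, v} : (v, u) \in S\} \cup \{\lambda_{j, a_j} : j \notin \mathcal{R} \cup \{i^*\}\}$; these are pairwise distinct because all $\lambda_{i,j}$ are distinct by construction, the $s_i$ values of $v$ in $S$ are distinct by design, and entries with different first subscript $i$ differ. Hence the system is invertible and the $s_i$ target unknowns in each block are determined; ranging over all blocks $S$ and all fixings of $a_{[n] \setminus \{i^*\}}$ recovers all $\ell$ symbols at node $i^*$.

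The main obstacle I anticipate is combinatorial rather than algebraic: for $i < m$, designing a partition of $[0, s_m - 1] \times [0, s - 1]$ into size-$s_i$ blocks with distinct $a_{i^*}$-coordinates. The divisibility $s_i \mid s$---provided precisely by the $\mathrm{lcm}$ factor $s$ in the sub-packetization $\ell = s \cdot s_m^n$---is what makes this partition possible, and is the structural reason the construction includes the extra factor $s$ rather than simply $s_m^n$. Once the partition is in place, the Vandermonde argument closes the proof immediately.
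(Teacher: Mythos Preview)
Your proposal is correct and follows essentially the same approach as the paper: for each block of size $s_i$ (with pairwise distinct $a_{i^*}$-values) you sum the corresponding parity checks and read off an invertible Vandermonde/GRS system, exactly as the paper does. The only cosmetic difference is the choice of partition for $i<m$: the paper groups $b$ as $b=\mu s_i+v$ (blocks $\{(a_e\oplus v,\mu s_i+v):v\in[0,s_i-1]\}$) whereas you group $b$ as $b=u+\delta\cdot(s/s_i)$; both exploit $s_i\mid s$ in the same way and lead to the identical bandwidth count.
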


\begin{proof}
We first prove the $(1,d_m)$-optimal repair property of $\mathcal{C}_1$.
Suppose node $e$ fails, i.e., $\bm{c}_e$ is erased for some $e\in[n]$. Let $\mathcal{R}_m\subseteq[n]\setminus\{e\}$ with $|\mathcal{R}_m|=d_m$ be the set of helper nodes.
Fix some $a_j\in[0,s_m-1]$ for all $j\in[n]\setminus\{e\}$ and some $b\in[0,s-1]$. Set $a=(a_1,\ldots,a_{e-1},0,a_{e+1},\ldots,a_{n})$. Then $\forall v\in[0,s_m-1]$, the $(a(e,v), b)$-th parity check equation in (\ref{def-C1}) gives
\begin{equation}\label{repair-1}
\sum_{j=1}^n\lambda_{j,a_j}^{t-1}c_{j,(a(e,v),b)}=0,~~~t\in[r].
\end{equation}
For $t\in[r]$, summing up the $s_m$ equations in (\ref{repair-1}) on $v=0,1,\ldots,s_m-1$, one can obtain
\begin{equation}\label{repair-2}
\sum_{v=0}^{s_m-1}\lambda_{e,v}^{t-1}c_{e,(a(e,v),b)}+\sum_{j\neq e}\lambda_{j,a_j}^{t-1}\sum_{v=0}^{s_m-1}c_{j,(a(e,v),b)}=0.
\end{equation}
This implies the symbols $\{c_{e,(a(e,v),b)}: v\in[0,s_m-1]\}\cup\{\sum_{v=0}^{s_m-1}c_{j,(a(e,v),b)}: j\in[n]\setminus\{e\}\}$ constitute an $[s_m+n-1=d_m+r, d_m]$ GRS codeword, since the coefficients
$\lambda_{i,j}$'s are all distinct. Thus node $e$ can recover $\{c_{e,(a(e,v),b)}: v\in[0,s_m-1]\}$ by downloading the symbol $\sum_{v=0}^{s_m-1}c_{j,(a(e,v),b)}$ from each helper node $j\in\mathcal{R}_m$.
When $a_j$, $j\in[n]\setminus\{e\}$ run though all integers in $[0,s_m-1]$ and $b$ runs through $[0,s-1]$, node $e$ can be recovered by downloading in total $d_ms_m^{n-1}s=\frac{d_m\ell}{s_m}=\frac{d_m\ell}{d_m-k+1}$ symbols, achieving the cut-set bound (\ref{cut-set-bound}).

Next we prove the $(1,d_i)$-optimal repair property of $\mathcal{C}_1$ for $i\in[m-1]$. Suppose node $e$ fails and $\mathcal{R}_i\subseteq[n]\setminus\{e\}$ with $|\mathcal{R}_i|=d_i$ is the helper node set.
Denote $u_i=\frac{s}{s_i}$ for simplicity.
Fix some $a\in[0,s_m^n-1]$ and $\mu\in[0,u_i-1]$. For $t\in[r]$, summing up the $s_i$ parity check equations in (\ref{def-C1}) indexed by $(a(e,a_e\oplus v), b=\mu s_i+v)$ on $v=0,1,\ldots, s_i-1$, it has
$$
\sum_{v=0}^{s_i-1}\lambda_{e,a_e\oplus v}^{t-1}c_{e,(a(e,a_e\oplus v),\mu s_i+v)}+\sum_{j\neq e}\lambda_{j,a_j}^{t-1}\sum_{v=0}^{s_i-1}c_{j,(a(e,a_e\oplus v),\mu s_i+v)}=0.
$$
Similar to (\ref{repair-2}), it has that $\{c_{e,(a(e,a_e\oplus v),\mu s_i+v)}: v\in[0,s_i-1]\}\cup\{\sum_{v=0}^{s_i-1}c_{j,(a(e,a_e\oplus v),\mu s_i+v)}: j\in[n]\setminus\{e\}\}$ constitute an $[s_i+n-1=d_i+r, d_i]$ GRS codeword.
Then $\{c_{e,(a(e,a_e\oplus v),\mu s_i+v)}: v\in[0,s_i-1]\}$ can be computed from $d_i$ downloaded symbols in helper nodes. When $a$ runs through $[0,s_m^n-1]$ and $\mu$ runs through $[0,u_i-1]$, then node $e$ can be recovered by noticing the fact that $\{(a(e,a_e\oplus v),\mu s_i+v): a\in[0,s_m^n-1], \mu\in[0,u_i-1], v\in[0,s_i-1]\}=[0,\ell-1]$. The repair bandwidth equals $d_is_m^n\cdot\frac{s}{s_i}=\frac{d_i\ell}{d_i-k+1}$ symbols, achieving the cut-set bound (\ref{cut-set-bound}).

\end{proof}

\begin{remark}\label{remark0}
In \cite{Liu-2023-multi-d}, the authors give an MSR code with multiple repair degrees $d_i$, $i\in[m]$.
Denote $s={\rm lcm}(d_i-k+1)_{i\in[m]}$ and $s_1=\min(d_i-k+1)_{i\in[m]}$, the sub-packetization $\ell=s^{\lceil\frac{n}{s_1}\rceil}$ when $s_1\in\{2,3,4\}$ and $\ell=({\rm lcm}(4,s))^{\lceil\frac{n}{4}\rceil}$ when $s_i>4$ for all $i\in[m]$. The new code $\mathcal{C}_1$ in this section has larger sub-packetization than that in \cite{Liu-2023-multi-d}, however, it can serve as the base code for constructing centralized MSR codes with multiple node erasures and multiple repair degrees simultaneously.
\end{remark}

\section{Centralized MSR codes with multiple node erasures and multiple repair degrees simultaneously}\label{sec-code2}
In this section, we extend the code in Section \ref{sec-code1} to a centralized MSR code with multiple node erasures and multiple repair degrees simultaneously. Let $1\leq h_i\leq n-k$ and $k\leq d_i\leq n-h_i$ for $i\in[m]$.
Suppose $h_i\mid(d_i-k)$ and denote $s_i=\frac{d_i-k+h_i}{h_i}$ for $i\in[m]$. W.L.O.G., we can assume $s_1\leq s_2\leq\cdots\leq s_m$. Denote $s={\rm lcm}(s_1,\ldots,s_{m-1})$.
Then the code $\mathcal{C}_2$ is an $(n,k,\ell=s\cdot s_m^n)$ MDS array code with $(h_i, d_i)$-optimal repair property for all $i\in[m]$.

Using notations in Section \ref{sec-code1}, $\mathcal{C}_2$ can be defined similarly as in Construction \ref{construction1} using $r\ell$ parity check equations as in (\ref{def-C1}) by replacing $s_i=\frac{d_i-k+h_i}{h_i}$ for $i\in[m]$. Obviously, $\mathcal{C}_2$ is an MDS array code. We illustrate the repair property of $\mathcal{C}_2$ in the following Theorem.

\begin{theorem}\label{thm2}
$\mathcal{C}_2$ satisfies the $(h_i,d_i)$-optimal repair property for all $i\in[m]$.
\end{theorem}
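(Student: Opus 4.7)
The plan is to extend the proof of Theorem~\ref{thm1} from a single failed node to $h_i$ simultaneously failed nodes. I would first establish the $(h_m,d_m)$-optimal repair property (which dictates the sub-packetization $\ell=s\cdot s_m^n$) and then handle each $(h_i,d_i)$ for $i<m$ by the coordinated $b$-coordinate slicing used in the second half of the proof of Theorem~\ref{thm1}, with $s_i=(d_i-k+h_i)/h_i$ in place of $d_i-k+1$.

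For the $(h_m,d_m)$ case, fix $\mathcal{H}=\{e_1,\dots,e_{h_m}\}$ and a helper set $\mathcal{R}\subseteq[n]\setminus\mathcal{H}$ of size $d_m$; for each $b\in[0,s-1]$ and each assignment $\{a_j\in[0,s_m-1]:j\notin\mathcal{H}\}$ I would work inside the group
\[G=\{(a(\mathcal{H},\bm{v}),b):\bm{v}\in[0,s_m-1]^{h_m}\}\]
of $s_m^{h_m}$ codeword positions. The cut-set bound asks for $\beta=\ell/s_m$ symbols per helper, equivalently $s_m^{h_m-1}$ per helper inside $G$. The starting point is the same slice-summation idea as in Theorem~\ref{thm1}: for a target $e_l\in\mathcal{H}$ and an assignment $\bm{w}\in[0,s_m-1]^{h_m-1}$ of the other failed coordinates, summing the $s_m$ parity equations along the $l$-th direction of $\bm{v}$ yields, for each $t\in[r]$, an equation whose coefficients $\lambda_{e_l,v}^{t-1}$ on the $s_m$ failed symbols of $e_l$ and $\lambda_{j,\cdot}^{t-1}$ on ``sliced sums'' of the remaining $n-1$ nodes form a $[d_m+r,d_m]$ GRS codeword. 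Its erasure budget $r=s_m+(h_m-1)+(n-d_m-h_m)$ exactly matches the $s_m$ failed symbols of $e_l$, the $h_m-1$ sliced failed sums of the other $e_{l'}$, and the $n-d_m-h_m$ passive sliced sums, so once the $d_m$ sliced helper sums are transmitted the GRS code decodes all of them. Iterating over $(l,\bm{w})$ with an induction on $l$---at stage $l$ subtracting the contributions of the already-recovered $e_1,\dots,e_{l-1}$ before decoding---reconstructs every failed symbol in $G$.

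The hard part will be compressing the helpers' transmissions into the optimal budget of $s_m^{h_m-1}$ per helper: a naive ``one sliced helper sum per $(l,\bm{w})$'' scheme costs $h_m s_m^{h_m-1}$ per helper, and even after discounting the $h_m-1$ obvious marginal identities $\sum_{\bm{w}}\sigma^{(l)}_{j,\bm{w}}=\sum_{\bm{v}}c_{j,(\bm{v},b)}$ (which coincide across all directions $l$) the count still exceeds the cut-set budget. Closing the remaining gap should exploit the alignment enabled by the identity $h_m s_m=d_m-k+h_m$: one must either choose a more carefully weighted family of downloads---rather than simple slice sums---or arrange the induction so that the sliced sums required at stage $l$ are recoverable from the previously transmitted data together with the already-decoded failed symbols. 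Exhibiting an explicit $s_m^{h_m-1}$-dimensional basis of downloads per helper and verifying that the successive GRS decoders still have full rank on this reduced information is the crux of the proof. Once the $(h_m,d_m)$ case is settled, the $(h_i,d_i)$-optimal repair for $i<m$ runs inside the $u_i=s/s_i$ disjoint $b$-slices with $\oplus$-shifts in the failed coordinates (exactly as in Theorem~\ref{thm1}), with $s_m$ replaced by $s_i$, and the bandwidth sums to $d_i h_i\ell/(d_i-k+h_i)$, matching the cut-set bound.
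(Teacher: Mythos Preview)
Your proposal has a genuine gap that you yourself flag: you set up the per-axis slice sums and then admit that compressing the $h_m s_m^{h_m-1}$ naive helper downloads down to the cut-set budget of $s_m^{h_m-1}$ ``is the crux of the proof,'' without actually doing it. (Incidentally, your claimed equality $r=s_m+(h_m-1)+(n-d_m-h_m)$ is also off: the right-hand side equals $s_m+n-d_m-1$, which equals $r$ only when $h_m=1$ or $d_m=k$; in general it is strictly smaller, so your GRS decoder has slack rather than a tight budget.) The missing idea, which the paper supplies, is not a sophisticated basis compression at all but a different choice of slice: instead of varying one failed coordinate at a time, shift \emph{all} $h_m$ failed coordinates in lockstep by the same amount $v\in[0,s_m-1]$, i.e.\ work with the positions $a(\mathcal{H}_m, a|_{\mathcal{H}_m}\oplus v\bm{1}_{h_m})$.

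Summing the parity checks over $v$ then produces, for each fixed $(a_2,\dots,a_n,b)$ (with $a_1=0$), a single $[h_ms_m+n-h_m,\,d_m]=[d_m+r,\,d_m]$ GRS codeword: the $h_ms_m$ failed symbols $c_{j,(a(\mathcal{H}_m,\cdot),b)}$ carry the $h_ms_m$ pairwise distinct evaluation points $\lambda_{j,a_j\oplus v}$, while each non-failed node $j$ contributes one symbol $\sum_v c_{j,(a(\mathcal{H}_m,\cdot),b)}$ at $\lambda_{j,a_j}$. The unknowns are exactly the $h_ms_m$ failed symbols plus the $n-h_m-d_m$ non-helper sums, totalling $(d_m-k+h_m)+(n-h_m-d_m)=r$, so a single download per helper per coset suffices. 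Since these diagonal cosets partition $[0,\ell-1]$, the bandwidth is $d_m\cdot s_m^{n-1}s=d_mh_m\ell/(d_m-k+h_m)$, on the nose. Your plan for $i<m$ via $b$-slicing into blocks of length $s_i$ is correct, but it must invoke this same synchronized-shift trick within each block rather than the axis-by-axis scheme.
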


\begin{proof}
We first prove the $(h_m,d_m)$-optimal repair property of $\mathcal{C}_2$.
W.L.O.G., suppose the $h_m$ failed nodes are $\mathcal{H}_m=\{1,2,\ldots, h_m\}$. Let $\mathcal{R}_m\subseteq[n]\setminus\mathcal{H}_m$ with $|\mathcal{R}_m|=d_m$ be the set of helper nodes.
Fix some $a_2,a_3,\ldots,a_n\in[0,s_m-1]$ and some $b\in[0,s-1]$.
Set $a=(0,a_2,a_3,\ldots,a_n)$. Let $\bm{1}_{h_m}$ denote the length-$h_m$ all-one vector $(1,1,\ldots,1)$.
$\forall v\in[0,s_m-1]$, $v\bm{1}_{h_m}=(v,v,\ldots,v)$.
Recall the notations in Section \ref{sec-code1}, $a(\mathcal{H}_m, a|_{\mathcal{H}_m}\oplus v\bm{1}_{h_m})=(v, a_2\oplus v, \ldots, a_{h_m}\oplus v, a_{h_m+1}, \ldots, a_n)$.
Now, consider the parity check equations indexed by $(a(\mathcal{H}_m, a|_{\mathcal{H}_m}\oplus v\bm{1}_{h_m}),b)$ for some $v\in[0,s_m-1]$, it has
\begin{equation}\label{repair-3}
\begin{aligned}
\sum_{j\in\mathcal{H}_m}\lambda_{j,a_j\oplus v}^{t-1}c_{j,(a(\mathcal{H}_m, a|_{\mathcal{H}_m}\oplus v\bm{1}_{h_m}),b)}+
\sum_{j\notin\mathcal{H}_m}\lambda_{j,a_j}^{t-1}c_{j,(a(\mathcal{H}_m, a|_{\mathcal{H}_m}\oplus v\bm{1}_{h_m}),b)}=0,~~~t\in[r].
\end{aligned}
\end{equation}
For $t\in[r]$, summing up the $s_m$ equations in (\ref{repair-3}) on $v=0,1,\ldots,s_m-1$, it has
\begin{equation}\label{repair-4}
\begin{aligned}
\sum_{j\in\mathcal{H}_m}\sum_{v=0}^{s_m-1}\lambda_{j,a_j\oplus v}^{t-1}c_{j,(a(\mathcal{H}_m, a|_{\mathcal{H}_m}\oplus v\bm{1}_{h_m}),b)}+
\sum_{j\notin\mathcal{H}_m}\lambda_{j,a_j}^{t-1}\sum_{v=0}^{s_m-1}c_{j,(a(\mathcal{H}_m, a|_{\mathcal{H}_m}\oplus v\bm{1}_{h_m}),b)}=0,~~~t\in[r].
\end{aligned}
\end{equation}
This implies the symbols
$$\begin{aligned}
\{c_{j,(a(\mathcal{H}_m, a|_{\mathcal{H}_m}\oplus v\bm{1}_{h_m}),b)}: j\in\mathcal{H}_m, v\in[0,s_m-1]\}
\cup\{\sum_{v=0}^{s_m-1}c_{j,(a(\mathcal{H}_m, a|_{\mathcal{H}_m}\oplus v\bm{1}_{h_m}),b)}: j\in[n]\setminus\mathcal{H}_m\}
\end{aligned}$$
constitute a $[h_ms_m+n-h_m=d_m+r,d_m]$ GRS codeword.
Thus, $\{c_{j,(a(\mathcal{H}_m, a|_{\mathcal{H}_m}\oplus v\bm{1}_{h_m}),b)}: j\in\mathcal{H}_m, v\in[0,s_m-1]\}$ can be recovered by downloading one symbol from each of the $d_m$ helper nodes $\mathcal{R}_m$. Let $a_2,\ldots,a_n$ run through all integers in $[0,s_m-1]$ and $b$ run through all integers in $[0,s-1]$, then the failed nodes in $\mathcal{H}_m$ can be repaired since the integer set
$\{(a(\mathcal{H}_m, a|_{\mathcal{H}_m}\oplus v\bm{1}_{h_m}),b): a=(0,a_2,\ldots,a_n), a_2,\ldots,a_n\in[0,s_m-1], b\in[0,s-1], v\in[0,s_m-1]\}$ is exactly the set of integers $[0,\ell-1]$.
The total repair bandwidth is $d_ms_m^{n-1}s=\frac{d_m\ell}{s_m}=\frac{d_mh_m\ell}{d_m-k+h_m}$ symbols, achieving the cut-set bound (\ref{cut-set-bound}).

Next, we prove the $(h_i,d_i)$-optimal repair property of $\mathcal{C}_2$ for all $i\in[m-1]$.
Given $i\in[m-1]$,
suppose the nodes in $\mathcal{H}_i$ with $|\mathcal{H}_i|=h_i$ are failed, and $\mathcal{R}_i\subseteq[n]\setminus\mathcal{H}_i$ with $|\mathcal{R}_i|=d_i$ is the helper node set.
Denote $u_i=\frac{s}{s_i}$ for simplicity. Fix some $a\in[0,s_m^n-1]$ and $\mu\in[0,u_i-1]$. Summing up the $s_i$ parity check equations indexed by $(a(\mathcal{H}_i, a|_{\mathcal{H}_i}\oplus v\bm{1}_{h_i}),b=\mu s_i+v)$
on $v=0,1,\ldots,s_i-1$, it has
$$
\begin{aligned}
\sum_{j\in\mathcal{H}_i}\sum_{v=0}^{s_i-1}\lambda_{j,a_j\oplus v}^{t-1}c_{j,(a(\mathcal{H}_i, a|_{\mathcal{H}_i}\oplus v\bm{1}_{h_i}),\mu s_i+v)}+
\sum_{j\notin\mathcal{H}_i}\lambda_{j,a_j}^{t-1}\sum_{v=0}^{s_i-1}c_{j,(a(\mathcal{H}_i, a|_{\mathcal{H}_i}\oplus v\bm{1}_{h_i}),\mu s_i+v)}=0,~~~t\in[r].
\end{aligned}
$$
Similar to (\ref{repair-4}), it indicates that the symbols
$$\begin{aligned}
\{c_{j,(a(\mathcal{H}_i, a|_{\mathcal{H}_i}\oplus v\bm{1}_{h_i}),\mu s_i+v)}: j\in\mathcal{H}_i, v\in[0,s_i-1]\}
\cup\{\sum_{v=0}^{s_i-1}c_{j,(a(\mathcal{H}_i, a|_{\mathcal{H}_i}\oplus v\bm{1}_{h_i}),\mu s_i+v)}: j\in[n]\setminus\mathcal{H}_i\}
\end{aligned}$$
forms a $[h_is_i+n-h_i=d_i+r,d_i]$ GRS codeword, and $\{c_{j,(a(\mathcal{H}_i, a|_{\mathcal{H}_i}\oplus v\bm{1}_{h_i}),\mu s_i+v)}: j\in\mathcal{H}_i, v\in[0,s_i-1]\}$ can be recovered from $d_i$ downloaded symbols.
Run through all $a\in[0,s_m^n-1]$ and all $\mu\in[0,u_i-1]$, the nodes in $\mathcal{H}_i$ can be repaired by noticing the fact that
$\{(a(\mathcal{H}_i, a|_{\mathcal{H}_i}\oplus v\bm{1}_{h_i}),\mu s_i+v): a\in[0,s_m^n-1], \mu\in[0,u_i-1], v\in[0,s_i-1]\}=[0,\ell-1]$. The repair bandwidth equals $d_is_m^nu_i=\frac{d_ih_i\ell}{d_i-k+h_i}$ symbols.

\end{proof}

According to Theorem \ref{thm2}, the following Corollary is straightforward.

\begin{corollary}\label{cor}
There is an $(n,k,\ell={\rm lcm}(1,2,\ldots,r)r^n)$ MDS array code satisfying $(h_i,d_i)$-optimal repair property for all possible $1\leq h_i\leq n-k$ and $k\leq d_i\leq n-h_i$ with $h_i\mid(d_i-k)$ simultaneously.
\end{corollary}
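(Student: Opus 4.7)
The plan is to invoke Theorem \ref{thm2} with the index set $\{(h_i,d_i)\}_{i\in[m]}$ chosen to enumerate every admissible pair, namely all $(h,d)$ with $1\leq h\leq n-k$, $k\leq d\leq n-h$, and $h\mid(d-k)$. Since Theorem \ref{thm2} produces a single code $\mathcal{C}_2$ that simultaneously admits $(h_i,d_i)$-optimal repair for each $i$, this immediately yields the simultaneous repair property required by the corollary; the only remaining task is to verify that the resulting sub-packetization $s\cdot s_m^n$ is bounded by ${\rm lcm}(1,2,\ldots,r)\cdot r^n$.

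To do so I would track the range of $s_i=(d_i-k+h_i)/h_i=(d_i-k)/h_i+1$ as $(h_i,d_i)$ varies over admissible pairs. Two elementary facts suffice. First, $s_i\leq r/h_i\leq r$, with the extremal equality $s_i=r$ forcing $h_i=1$ and $d_i=n-1$, hence attained by exactly one pair. Second, for every $s\in\{2,\ldots,r-1\}$ the pair $(h,d)=(1,k+s-1)$ is admissible and contributes $s_i=s$. After sorting so that $s_1\leq\cdots\leq s_m$, these observations give $s_m=r$ and guarantee that $\{s_1,\ldots,s_{m-1}\}$ contains every value in $\{1,2,\ldots,r-1\}$ (with possible multiplicities coming from pairs with $h_i>1$, which are harmless for the lcm). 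Therefore $s={\rm lcm}(s_1,\ldots,s_{m-1})={\rm lcm}(1,2,\ldots,r-1)\leq{\rm lcm}(1,2,\ldots,r)$, and consequently $s\cdot s_m^n\leq{\rm lcm}(1,2,\ldots,r)\cdot r^n$, matching the bound stated in the corollary.

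There is no genuine obstacle here: the substantive work, namely constructing the code and proving that each individual $(h_i,d_i)$-optimal repair scheme succeeds, has already been done in Theorem \ref{thm2}, and the corollary is a pure aggregation step. The only care needed is to notice that the extremal value $s_m=r$ is \emph{isolated}, so that it contributes only through the factor $r^n$ and is not absorbed into the lcm; this is what allows us to collapse the sub-packetization to the single clean expression ${\rm lcm}(1,2,\ldots,r)\cdot r^n$.
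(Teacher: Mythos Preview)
Your proposal is correct and follows the same route as the paper, which simply declares the corollary ``straightforward'' from Theorem~\ref{thm2} and gives no further details; your write-up supplies exactly the bookkeeping the paper omits. In fact your analysis shows a bit more than stated: since $s_m=r$ is attained \emph{only} by the pair $(1,n-1)$, the remaining $s_i$ fill out $\{1,\ldots,r-1\}$, so the actual sub-packetization of $\mathcal{C}_2$ is ${\rm lcm}(1,\ldots,r-1)\cdot r^n$, which is strictly smaller than the announced ${\rm lcm}(1,\ldots,r)\cdot r^n$ whenever $r$ is a prime power --- the paper's formulation is an upper bound (consistent with the ``no more than'' phrasing in the abstract and Table~\ref{tab1}).
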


\section{Centralized MSR codes for general $h$ and $d$}\label{sec-code3}
The centralized MSR code $\mathcal{C}_2$ in section \ref{sec-code2} is designed for special values of $(h_i,d_i)$, i.e., $h_i\mid(d_i-k)$ for $i\in[m]$.
In this section, for a single pair $(h,d)$, we firstly present a code construction with $(h,d)$-optimal repair property for general $h$ and $d$. Then for special values of $h$ and $d$, we show the sub-packetization can be further reduced.

\subsection{Codes for general $h$ and $d$}
Let $2\leq h\leq n-k$ and $k<d\leq n-h$. Denote $\delta=\gcd(h,d-k)$, then $\delta\mid(d-k+h)$ and $\delta\mid(d-k+\delta)$.
The code $\mathcal{C}_3$ is an $(n,k,\ell=\frac{d-k+h}{\delta}(\frac{d-k+\delta}{\delta})^n)$ MDS array code over a finite field $F$.

For simplicity, denote $s_0=\frac{d-k+\delta}{\delta}$ and $s=\frac{d-k+h}{\delta}$, then $s=s_0+\frac{h}{\delta}-1$ and $\ell=s\cdot s_0^n$. Based on the notations as before, we use pairs $(a,b)$ with $a\in[0,s_0^n-1]$, $b\in[0,s-1]$ to represent integers in $[0,\ell-1]$ and use the notation $\oplus$ to denote addition modulo $s_0$.
Then the code $\mathcal{C}_3$ is defined similarly as in Construction \ref{construction1} using $r\ell$ parity check equations as in (\ref{def-C1}) by replacing $s_m=s_0$ and $s=s_0+\frac{h}{\delta}-1$.
Obviously, $\mathcal{C}_3$ is an MDS array code. In the following, we show the $(h,d)$-optimal repair property of $\mathcal{C}_3$.

\begin{theorem}\label{thm3}
$\mathcal{C}_3$ satisfies the $(h,d)$-optimal repair property.
\end{theorem}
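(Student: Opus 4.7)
The plan is to generalize the shift-and-sum strategy of the proof of Theorem~\ref{thm2} by partitioning $\mathcal{H}$ into $h'=h/\delta$ blocks of size $\delta$, noting that $\delta\mid(d-k)$ so the block-level divisibility condition of Theorem~\ref{thm2} holds. For each block $\mathcal{H}_i$ and any index $(a,b)$, summing the $s_0=(d-k+\delta)/\delta$ parity checks indexed by $(a(\mathcal{H}_i,a|_{\mathcal{H}_i}\oplus v\bm{1}_\delta),b)$ for $v\in[0,s_0-1]$ yields, for each $t\in[r]$, an equation
\[\sum_{j\in\mathcal{H}_i}\sum_{v=0}^{s_0-1}\lambda_{j,a_j\oplus v}^{t-1}\,c_{j,(a(\mathcal{H}_i,a|_{\mathcal{H}_i}\oplus v\bm{1}_\delta),b)} + s_0\!\!\sum_{j\notin\mathcal{H}_i}\lambda_{j,a_j}^{t-1}\,c_{j,(a,b)}=0,\]
since the shift acts only on $\mathcal{H}_i$ and thus leaves the index (and the symbol) unchanged for every $j\notin\mathcal{H}_i$. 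This is a parity check of a $[d+r,d]$ GRS codeword of length $\delta s_0+(n-\delta)=d+r$ with distinct $\lambda$-labels $\{\lambda_{j,v}:j\in\mathcal{H}_i,v\in[0,s_0-1]\}\cup\{\lambda_{j,a_j}:j\notin\mathcal{H}_i\}$.

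Among the $n-\delta$ ``base'' positions, the $n-h$ helper symbols are downloadable, but the $h-\delta$ symbols $c_{j,(a,b)}$ for $j\in\mathcal{H}\setminus\mathcal{H}_i$ are interference. The essential new ingredient is the extra $h'-1$ values in $b\in[0,s-1]$ provided by $s=s_0+h'-1$: for each $a\in[0,s_0^n-1]$ I would form a batch consisting of $h'$ block sub-rounds, assigning to block $\mathcal{H}_i$ a $b$-index $b_i(a)\in[0,s-1]$ that depends on $i$ together with an appropriate coordinate of $a$. The assignment is designed so that every failed symbol index $(j,(a^*,b^*))$ with $j\in\mathcal{H}_{i^*}$ is resolved in exactly one of two ways: (i)~there exist a unique batch $a$ and a unique shift $v\in[0,s_0-1]$ with $a^*=a(\mathcal{H}_{i^*},a|_{\mathcal{H}_{i^*}}\oplus v\bm{1}_\delta)$ and $b^*=b_{i^*}(a)$, so the symbol is a target of block $i^*$'s sub-round; or (ii)~$a^*=a$ and $b^*=b_i(a^*)$ for some $i\ne i^*$, so the symbol appears as interference in block $i$'s sub-round. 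The $s_0$ target $b$-values plus the $h'-1$ interference $b$-values per $(a^*,i^*)$ sum to $s$, which is the arithmetic significance of $s=s_0+h'-1$ and is precisely the full $b$-range.

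Given such an assignment, in each sub-round the $n-h$ helper contributions are downloaded and the $h-\delta$ interference contributions coincide with targets of other sub-rounds in the joint system; so across the batch we have $n-\delta\ge d$ known positions per sub-round, and the $r$ parity checks of the $[d+r,d]$ GRS code determine the remaining $\delta s_0$ target symbols. Counting: $s_0^n$ batches, each with $h'$ sub-rounds and one helper download per sub-round, give $h's_0^n=h\ell/(d-k+h)=\beta$ downloads per helper, matching the cut-set bound~(\ref{cut-set-bound}); the total number of target-plus-interference recoveries is $s_0^n\cdot(hs_0+h(h'-1))=hs\cdot s_0^n=h\ell$, so every failed symbol is covered. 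The hardest step is constructing the precise map $b_i(\cdot)$ and verifying that it realises a bijection between $(\text{batch},\text{sub-round},\text{shift})$-triples and failed-symbol indices, so that every symbol is recovered exactly once and the interference alignment is consistent; this combinatorial design problem is the crux of the argument and is where the identity $s_0+h'-1=s$ is indispensable.
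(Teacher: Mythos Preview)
Your displayed equation is incorrect, and the error is fatal for the rest of the argument. When you shift $a\mapsto a(\mathcal{H}_i,a|_{\mathcal{H}_i}\oplus v\bm{1}_\delta)$, the \emph{index} changes in its $\mathcal{H}_i$-coordinates, so the \emph{symbol} $c_{j,(a(\mathcal{H}_i,a|_{\mathcal{H}_i}\oplus v\bm{1}_\delta),b)}$ is a different symbol for each $v$, even when $j\notin\mathcal{H}_i$. Only the \emph{coefficient} $\lambda_{j,a_j}$ stays fixed. Hence the correct second term is
\[
\sum_{j\notin\mathcal{H}_i}\lambda_{j,a_j}^{t-1}\sum_{v=0}^{s_0-1}c_{j,(a(\mathcal{H}_i,a|_{\mathcal{H}_i}\oplus v\bm{1}_\delta),b)},
\]
not $s_0\sum_{j\notin\mathcal{H}_i}\lambda_{j,a_j}^{t-1}c_{j,(a,b)}$. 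Consequently the interference from a node $j\in\mathcal{H}\setminus\mathcal{H}_i$ is a \emph{sum of $s_0$ distinct symbols}, not the single symbol $c_{j,(a,b)}$. Your case~(ii), which asserts that the interference equals a single target symbol of another sub-round at index $(a,b_i(a))$, is therefore built on a false premise, and the bijection you sketch between target/interference occurrences and failed-symbol indices does not exist as stated.

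The paper handles this by \emph{coupling the $b$-coordinate to the shift $v$}: for block $P_i$ it sums the parity checks at $(a(P_i,a|_{P_i}\oplus v\bm{1}_\delta),\,b=v)$ for $v\in[0,s_0-2]$ together with the single check at $(a(P_i,a|_{P_i}\oplus(s_0-1)\bm{1}_\delta),\,b=s_0-2+i)$, so the $b$-value set for block $i$ is $\Omega_i=\{0,\ldots,s_0-2,s_0-2+i\}$. With this coupling the interference sum at a node $j\in P_{i'}$ ($i'\neq i$) decomposes as $\sum_{v=0}^{s_0-2}c_{j,(\cdot,v)}+c_{j,(\cdot,s_0-2+i)}$, where the first $s_0-1$ summands lie in $\Omega_{i'}$ and are therefore already recovered as \emph{targets} of block $i'$; subtracting them isolates the one remaining symbol with $b=s_0-2+i\in\Omega_{i'}^c$. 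That is the mechanism by which the identity $s=s_0+\frac{h}{\delta}-1$ is actually used, and it is fundamentally different from assigning a single $b$-value $b_i(a)$ per sub-round. Your proposal does correctly anticipate the block partition, the $[d+r,d]$ GRS structure, and the role of the extra $h'-1$ values of $b$, but the specific alignment you describe cannot work without tying $b$ to $v$.
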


We prove Theorem \ref{thm3} by giving the precise repair scheme.
Suppose $h$ nodes indexed by $\mathcal{H}=\{i_1,i_2,\ldots,i_h\}\subseteq[n]$ are failed and the helper node set is $\mathcal{R}\subseteq[n]\setminus\mathcal{H}$ with $|\mathcal{R}|=d$.
We partition the failed set $\mathcal{H}$ into $\frac{h}{\delta}$ disjoint subsets $P_1,\ldots, P_{h/\delta}$ with each $|P_i|=\delta$, $i\in[\frac{h}{\delta}]$.
 Moreover, for $i\in[\frac{h}{\delta}]$, define
$$
\Omega_i=\{0,1,\ldots,s_0-2, s_0-2+i\}.
$$
The repair process proceeds in two steps. In the first step, partial data of the failed nodes in each group $P_i$ for $i\in[\frac{h}{\delta}]$ is recovered in parallel. In the second step, the remaining data of each failed node is computed. We illustrate the first repair step in the following Lemma.

\begin{lemma}\label{lemma1} Denote by $\bm{1}_{\delta}$ the length-$\delta$ all-one vector.
For each $i\in[\frac{h}{\delta}]$,
By downloading the symbols
 \begin{equation}\label{downloaddata}
 \begin{aligned}
 \Big \{\sum_{v=0}^{s_0-2}c_{j,(a(P_i,a|_{P_i}\oplus v\bm{1}_{\delta}), v)}+
 c_{j,(a(P_i,a|_{P_i}\oplus (s_0-1)\bm{1}_{\delta}), s_0-2+i)}:
 0\leq a\leq s_0^n-1 \Big \}
 \end{aligned}
 \end{equation}
 from each helper node $j\in\mathcal{R}$, one can recover the data
\begin{equation}\label{recoverdata-1}
\cup_{j\in P_i}\big\{c_{j,(a,b)}: a\in[0,s_0^n-1], b\in\Omega_i\big\}
\end{equation}
 and
  \begin{equation}\label{recoverdata-2}
  \begin{aligned}
 \cup_{j\in \mathcal{H}\setminus P_i}\Big\{\sum_{v=0}^{s_0-2}c_{j,(a(P_i,a|_{P_i}\oplus v\bm{1}_{\delta}), v)}+
 c_{j,(a(P_i,a|_{P_i}\oplus (s_0-1)\bm{1}_{\delta}), s_0-2+i)}:
  a\in[0,s_0^n-1] \Big\}.
 \end{aligned}
 \end{equation}
\end{lemma}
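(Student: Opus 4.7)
The plan is to carry out, for each fixed $a\in[0,s_0^n-1]$, a linear combination of $s_0$ parity-check equations from (\ref{def-C1}) that collapses the non-failed nodes into exactly the aggregates downloaded in (\ref{downloaddata}), keeps the failed $P_i$-symbols as erasures, and yields the parity-check relations of an $[d+r,d]$ GRS code; GRS decoding from the $d$ downloaded values then recovers both (\ref{recoverdata-1}) and (\ref{recoverdata-2}). Concretely, for each $v\in[0,s_0-1]$ I would single out one equation from (\ref{def-C1}): for $v\in[0,s_0-2]$ the equation indexed by $(a(P_i,a|_{P_i}\oplus v\bm{1}_{\delta}),v)$, and for $v=s_0-1$ the equation indexed by $(a(P_i,a|_{P_i}\oplus(s_0-1)\bm{1}_{\delta}),s_0-2+i)$, so that the $s_0$ chosen second-coordinates $b_v$ traverse precisely the elements of $\Omega_i$.

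Summing these $s_0$ equations for each $t\in[r]$ is the key algebraic step. For $j\in P_i$ the shift $\oplus v\bm{1}_{\delta}$ changes the $a$-coordinate with $v$, so the $s_0$ summands remain distinct erased symbols weighted by $\lambda_{j,a_j\oplus v}^{t-1}$, contributing $\delta s_0=d-k+\delta$ unknowns. For $j\notin P_i$ the $a$-coordinate is fixed throughout the sum, so the $s_0$ terms collapse into a single aggregate equal to the value in (\ref{downloaddata}), weighted by $\lambda_{j,a_j}^{t-1}$. Since the $\lambda_{i,j}$ are pairwise distinct by Construction~\ref{construction1}, the combined $r\times(d+r)$ coefficient array is the parity-check matrix of an $[\delta s_0+(n-\delta),d]=[d+r,d]$ GRS code. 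The $d$ aggregates downloaded from $\mathcal{R}$ supply $d$ known coordinates; the remaining $r$ unknowns, namely the $d-k+\delta$ erased symbols at $P_i$, the $h-\delta$ aggregates at $\mathcal{H}\setminus P_i$, and the $n-h-d$ aggregates at $[n]\setminus(\mathcal{H}\cup\mathcal{R})$, are resolved by GRS erasure decoding. As $a$ runs through $[0,s_0^n-1]$, the pairs $(a(P_i,a|_{P_i}\oplus v\bm{1}_{\delta}),b_v)$ with $v\in[0,s_0-1]$ sweep every element of $[0,s_0^n-1]\times\Omega_i$ exactly once, because for each fixed $v$ the map $a\mapsto a(P_i,a|_{P_i}\oplus v\bm{1}_{\delta})$ is a bijection of $[0,s_0^n-1]$ and the chosen $b_v$'s exhaust $\Omega_i$; this gives (\ref{recoverdata-1}) for the $P_i$ coordinates and (\ref{recoverdata-2}) as a by-product of the same decoding.

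The main subtlety, and what I expect to be the only nontrivial design choice, is why the $b$-coordinate for $v=s_0-1$ should be $s_0-2+i$ rather than the more natural $s_0-1$. Using $s_0-2+i$ tags group $P_i$ with a $b$-value in the upper block $[s_0-1,s-1]$ that is unique to $P_i$, which is what decouples the $h/\delta$ groups so that their step-one downloads and recoveries can proceed in parallel and the remaining $b$-values in $[0,s-1]\setminus\Omega_i$ are available to be cleaned up in a later step. Once this index selection is pinned down, the summation argument is a direct extension of the corresponding step in the proofs of Theorems~\ref{thm1} and~\ref{thm2}, and the GRS structure follows immediately from the pairwise distinctness of the $\lambda_{i,j}$.
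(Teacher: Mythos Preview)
Your proposal is correct and follows essentially the same approach as the paper: fix $a$, select the $s_0$ parity-check equations indexed by $(a(P_i,a|_{P_i}\oplus v\bm{1}_{\delta}),v)$ for $v\in[0,s_0-2]$ together with $(a(P_i,a|_{P_i}\oplus(s_0-1)\bm{1}_{\delta}),s_0-2+i)$, sum them for each $t\in[r]$, identify the resulting $[d+r,d]$ GRS codeword, and decode from the $d$ helper aggregates; then let $a$ range over $[0,s_0^n-1]$ and use the bijectivity of the shift maps to cover $[0,s_0^n-1]\times\Omega_i$. Your explicit erasure count $(d-k+\delta)+(h-\delta)+(n-h-d)=r$ and your remark on the role of the tag $s_0-2+i$ are accurate additions that the paper leaves implicit until Lemma~\ref{lemma2}.
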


\begin{proof}
Let $i\in[\frac{h}{\delta}]$, we show the repair process of the failed group $P_i$.
Fix some $a\in[0,s_0^n-1]$. For each $v\in[0,s_0-2]$, according to the parity check equations indexed by $(a(P_i,a|_{P_i}\oplus v\bm{1}_{\delta}), v)$, one has the following $r$ parity check equations, for $t\in[r]$,
\begin{equation}\label{pc-eq1}
\begin{aligned}
\sum_{j\in P_i}\lambda_{j,a_j\oplus v}^{t-1}c_{j,(a(P_i,a|_{P_i}\oplus v\bm{1}_{\delta}), v)}
+\sum_{j\in [n]\setminus P_i}\lambda_{j,a_j}^{t-1}c_{j,(a(P_i,a|_{P_i}\oplus v\bm{1}_{\delta}), v)}=0.
\end{aligned}
\end{equation}
Similarly, let $v=s_0-2+i$, according to the parity check equations indexed by $(a(P_i,a|_{P_i}\oplus (s_0-1)\bm{1}_{\delta}), s_0-2+i)$, one has $r$ parity check equations, for $t\in[r]$,
\begin{equation}\label{pc-eq2}
\begin{aligned}
\sum_{j\in P_i}\lambda_{j,a_j\oplus (s_0-1)}^{t-1}c_{j,(a(P_i,a|_{P_i}\oplus (s_0-1)\bm{1}_{\delta}), s_0-2+i)}
+\sum_{j\in [n]\setminus P_i}\lambda_{j,a_j}^{t-1}c_{j,(a(P_i,a|_{P_i}\oplus (s_0-1)\bm{1}_{\delta}), s_0-2+i)}=0.
\end{aligned}
\end{equation}
Summing up the $s_0$ equations (\ref{pc-eq1}) and (\ref{pc-eq2}) on all $v\in\Omega_i=[0,s_0-2]\cup\{s_0-2+i\}$, one can get $r$ parity check equations, shown in (\ref{pc-eq3}).
\begin{figure*}[!b]
   {\begin{equation}\begin{split}\label{pc-eq3}
&\sum_{j\in P_i}\Big(\sum_{v=0}^{s_0-2}\lambda_{j,a_j\oplus v}^{t-1}c_{j,(a(P_i,a|_{P_i}\oplus v\bm{1}_{\delta}), v)}
+\lambda_{j,a_j\oplus (s_0-1)}^{t-1}c_{j,(a(P_i,a|_{P_i}\oplus (s_0-1)\bm{1}_{\delta}), s_0-2+i)}\Big)\\
+&\sum_{j\in [n]\setminus P_i}\lambda_{j,a_j}^{t-1}\Big(\sum_{v=0}^{s_0-2}c_{j,(a(P_i,a|_{P_i}\oplus v\bm{1}_{\delta}), v)}
+c_{j,(a(P_i,a|_{P_i}\oplus (s_0-1)\bm{1}_{\delta}), s_0-2+i)}\Big)=0, ~~~~t\in[r].
\end{split}\end{equation}}
\end{figure*}

Note that in equation (\ref{pc-eq3}) the coefficients $\{\lambda_{j,a_j\oplus v}: j\in P_i, v\in[0,s_0-1]\}$ and $\{\lambda_{j,a_j}: j\in [n]\setminus P_i\}$ are all distinct, this implies that the $\delta s_0+n-\delta=d+r$ symbols
\begin{equation}\label{GRS-symbols}
\begin{aligned}&\big\{c_{j,(a(P_i,a|_{P_i}\oplus v\bm{1}_{\delta}), v)}: j\in P_i, v\in[0,s_0-2]\big\}\\
\cup&\big\{c_{j,(a(P_i,a|_{P_i}\oplus (s_0-1)\bm{1}_{\delta}), s_0-2+i)}: j\in P_i\big\}\\
\cup&\big\{ \sum_{v=0}^{s_0-2}c_{j,(a(P_i,a|_{P_i}\oplus v\bm{1}_{\delta}), v)}+
c_{j,(a(P_i,a|_{P_i}\oplus (s_0-1)\bm{1}_{\delta}), s_0-2+i)}: j\in [n]\setminus P_i \big\}
\end{aligned}
\end{equation}
constitute a $[d+r,d]$ GRS codeword.
Thus, the unknown data
\begin{equation*}
\begin{aligned}\cup_{j\in P_i}\big\{
 c_{j,(a(P_i,a|_{P_i}\oplus (s_0-1)\bm{1}_{\delta}), s_0-2+i)},
c_{j,(a(P_i,a|_{P_i}\oplus v\bm{1}_{\delta}), v)}: v\in[0,s_0-2]\big\}\\
\end{aligned}
\end{equation*}
and
\begin{equation*}
\begin{aligned}\cup_{j\in \mathcal{H}\setminus P_i}\big\{
 \sum_{v=0}^{s_0-2}c_{j,(a(P_i,a|_{P_i}\oplus v\bm{1}_{\delta}), v)}+
c_{j,(a(P_i,a|_{P_i}\oplus (s_0-1)\bm{1}_{\delta}), s_0-2+i)}  \big\}
\end{aligned}
\end{equation*}
can be recovered by downloading the symbol
$$\sum_{v=0}^{s_0-2}c_{j,(a(P_i,a|_{P_i}\oplus v\bm{1}_{\delta}), v)}+
c_{j,(a(P_i,a|_{P_i}\oplus (s_0-1)\bm{1}_{\delta}), s_0-2+i)}
$$
from each helper node $j\in \mathcal{R}$, where $|\mathcal{R}|=d$.

When $a$ runs through all the integers in $[0,s_0^n-1]$, then Lemma \ref{lemma1} can be obtained by noticing the fact that
\begin{equation*}
\begin{aligned}
&\big\{ (a(P_i,a|_{P_i}\oplus (s_0-1)\bm{1}_{\delta}), s_0-2+i) :  a\in[0,s_0^n-1] \big\} \\
\cup&\big\{ (a(P_i,a|_{P_i}\oplus v\bm{1}_{\delta}), v): v\in[0,s_0-2],a\in[0,s_0^n-1]   \big\} \\
=&\big\{ (a,b): a\in[0,s_0^n-1], b\in\Omega_i   \big\}
\end{aligned}
\end{equation*}

\end{proof}

Note that the repair bandwidth in the first repair step is $\frac{h}{\delta}ds_0^n=\frac{hd\ell}{d-k+h}$ symbols.
Then Theorem \ref{thm3} can be proved once the downloaded symbols in the first repair step are enough to repair all the $h$ failed nodes, indicating that the total repair bandwidth in the whole repair process is $\frac{hd\ell}{d-k+h}$ symbols, achieving the cut-set bound (\ref{cut-set-bound}).
In the following, we prove that the recovered data in (\ref{recoverdata-1}) and (\ref{recoverdata-2}) for all $i\in[\frac{h}{\delta}]$ as illustrated in Lemma \ref{lemma1} are able to recover all data in the failed nodes $\mathcal{H}$.

\begin{lemma}\label{lemma2}
All the data $\bm{c}_j$, $j\in\mathcal{H}$ can be computed from the recovered data (\ref{recoverdata-1}) and (\ref{recoverdata-2}) for $i\in[\frac{h}{\delta}]$ given in Lemma \ref{lemma1}.
\end{lemma}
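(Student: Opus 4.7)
The plan is to show that the indices $\Omega_i$ cover $[0,s-1]$ as $i$ ranges over $[\tfrac{h}{\delta}]$, so that for each failed node $j\in\mathcal{H}$ and each $b\in[0,s-1]$, either $c_{j,(a,b)}$ is delivered directly by (\ref{recoverdata-1}), or it can be extracted from a sum in (\ref{recoverdata-2}) after subtracting off terms that have already been delivered directly. The total bandwidth will then be exactly the first-step cost $\tfrac{hd\ell}{d-k+h}$, matching (\ref{cut-set-bound}).

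First I would verify the combinatorial covering. Since $\Omega_i=[0,s_0-2]\cup\{s_0-2+i\}$ and $s=s_0+\tfrac{h}{\delta}-1$, we have $\bigcup_{i=1}^{h/\delta}\Omega_i=[0,s-1]$, and the only overlaps occur in the common prefix $[0,s_0-2]$. In particular, for every $j\in\mathcal{H}$, say $j\in P_i$, the set (\ref{recoverdata-1}) already yields $c_{j,(a,b)}$ for all $a\in[0,s_0^n-1]$ and all $b\in\Omega_i$; what remains is to recover, for each $i'\in[\tfrac{h}{\delta}]\setminus\{i\}$, the symbols $c_{j,(a',s_0-2+i')}$ for all $a'\in[0,s_0^n-1]$.

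The key step is to fix such $i'\neq i$ and look at the sum in (\ref{recoverdata-2}) produced by the repair of $P_{i'}$, evaluated at node $j\in P_i\subseteq \mathcal{H}\setminus P_{i'}$: for each $a\in[0,s_0^n-1]$,
\begin{equation*}
\sum_{v=0}^{s_0-2} c_{j,(a(P_{i'},a|_{P_{i'}}\oplus v\bm{1}_\delta),v)}
+ c_{j,(a(P_{i'},a|_{P_{i'}}\oplus(s_0-1)\bm{1}_\delta),s_0-2+i')}.
\end{equation*}
Each of the $s_0-1$ terms in the sum has its second coordinate $v\in[0,s_0-2]\subseteq\Omega_i$, so those symbols are already known from the direct recovery of $P_i$ in (\ref{recoverdata-1}). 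Subtracting them out isolates $c_{j,(a(P_{i'},a|_{P_{i'}}\oplus(s_0-1)\bm{1}_\delta),s_0-2+i')}$.

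Finally, I would note that the map $a\mapsto a(P_{i'},a|_{P_{i'}}\oplus(s_0-1)\bm{1}_\delta)$ is a bijection on $[0,s_0^n-1]$, since $x\mapsto x\oplus(s_0-1)$ permutes $[0,s_0-1]$. Hence letting $a$ run through $[0,s_0^n-1]$ recovers $c_{j,(a',s_0-2+i')}$ for every $a'\in[0,s_0^n-1]$. Doing this for every $i'\neq i$ and every $j\in P_i$, and then for every $i\in[\tfrac{h}{\delta}]$, reconstructs all of $\bm{c}_j$, $j\in\mathcal{H}$. The only mild obstacle I anticipate is bookkeeping the index $a(P_{i'},\cdot)$ carefully to confirm the bijectivity and the disjointness of the cases covered by each $i'$; everything else is linear algebra already carried out in Lemma \ref{lemma1}.
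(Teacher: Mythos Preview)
Your proof is correct and follows essentially the same approach as the paper: for each $j\in P_i$ the symbols with $b\in\Omega_i$ come directly from (\ref{recoverdata-1}), and the remaining symbols $c_{j,(a',s_0-2+i')}$ for $i'\neq i$ are isolated from the sums in (\ref{recoverdata-2}) by subtracting off the already-known terms with second coordinate in $[0,s_0-2]\subseteq\Omega_i$. Your explicit verification of the covering $\bigcup_i\Omega_i=[0,s-1]$ and of the bijectivity of $a\mapsto a(P_{i'},a|_{P_{i'}}\oplus(s_0-1)\bm{1}_\delta)$ makes the argument slightly more detailed than the paper's, but the underlying idea is identical.
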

\begin{proof}
By Lemma \ref{lemma1}, $\forall i\in[\frac{h}{\delta}]$, and $\forall j\in P_i$, the failed node $j$ has recovered some independent symbols:
$$
\big\{c_{j,(a,b)}: a\in[0,s_0^n-1], b\in\Omega_i\big\}
$$
and some symbol sums:
$$
\begin{aligned}
 \big\{c_{j,(a(P_{i'},a|_{P_{i'}}\oplus (s_0-1)\bm{1}_{\delta}), s_0-2+i')}+
 \underline{\sum_{v=0}^{s_0-2}c_{j,(a(P_{i'},a|_{P_{i'}}\oplus v\bm{1}_{\delta}), v)}}:
 a\in[0,s_0^n-1], i'\in[\frac{h}{\delta}]\!\setminus\!\{i\} \big\},
 \end{aligned}
$$
where the underlined part in the symbol sums can be computed from the independent symbols.
Thus node $j$ can further recover
$$
\big\{c_{j,(a(P_{i'},a|_{P_{i'}}\oplus (s_0\!-\!1)\bm{1}_{\delta}), s_0-2+i')}:a\in[0,s_0^n-1], i'\!\in\![\frac{h}{\delta}]\!\setminus\!\{i\} \big\},
$$
which is exactly the set of symbols
$$
\big\{c_{j,(a,b)}: a\in[0,s_0^n-1], b\in[0,s_0+\frac{h}{\delta}-2]\setminus\Omega_i\big\}.
$$
Thus the proof is completed.
\end{proof}

\begin{remark}\label{remark1}
When $\delta=1$, the code $\mathcal{C}_3$ reduces to the code construction in \cite{Ye2020} which is tailored for cooperative repair. However, when $\delta>1$, the code has much smaller sub-packetization than that in \cite{Ye2020} under the centralized repair model. The main technique for node repair in $\mathcal{C}_3$ follows the idea of ``local centralization and group collaboration". That is to say, the $h$ failed nodes are partitioned into $\frac{h}{\delta}$ local groups so that repair within each group can be seen as centralized and repair across groups can be seen as cooperative.
\end{remark}

\begin{remark}\label{remark2}
When $\delta=h$, i.e. $h\mid (d-k)$, the sub-packetization of $\mathcal{C}_3$ is $\ell=(\frac{d-k+h}{h})^{n+1}$. However, we claim that in this case, $\ell=(\frac{d-k+h}{h})^{n}$ is enough for an MDS array code to repair $h$ erasures.
Actually, recall the code construction $\mathcal{C}_2$ given in Section \ref{sec-code2}. Set $m=1$ and $s=1$. Then $\mathcal{C}_2$ has sub-packetization $\ell=(\frac{d-k+h}{h})^{n}$ and satisfies the $(h,d)$-optimal repair property.
This construction coincides with the code given in \cite{Li-arxiv}. Indeed, the works \cite{Ye2020} and \cite{Li-arxiv} can be seen as two extreme cases of our constructions with $\delta=1$ and $\delta=h$ respectively.

\end{remark}

\begin{remark}
The code $\mathcal{C}_3$ is designed for a single pair $(h,d)$. Following the idea that $\mathcal{C}_2$ is constructed, the code $\mathcal{C}_3$ can be extended to possess $(h_i,d_i)$-optimal repair property for all possible pairs $(h_i,d_i)$ simultaneously. We give the precise construction in Section \ref{sec-code4}.
\end{remark}

\subsection{Codes for special values of $h$ and $d$}
We continue to apply the repair idea to the Hadamard MSR code in  \cite{Liu2023} and modify the cooperative repair scheme to a centralized one with smaller sub-packetization.
Here we assume $d>k, (d-k)\mid h$  and $(\frac{h}{d-k}+1) \mid 2^n$. 
The Hadamard MSR code is an $(n,k,\ell=2^n)$ MDS array code over a finite field $F$ with $|F|>2n$.
As before, we write each integer $a\in[0,\ell-1]$ by its binary expansion representation $(a_1,\ldots,a_n)\in\{0,1\}^n$ where $a=\sum_{i=1}^{n}a_i2^{i-1}$.
The Hadamard MSR code is given by the following $r\ell$ parity check equations
\begin{equation}\label{hadamard}
\sum_{j\in[n]}\lambda_{j,a_j}^{t-1}c_{j,a}=0,  ~~~a\in[0,\ell-1],~ t\in[r],
\end{equation}
where $r=n-k$. Compared with the code $\mathcal{C}_3$, the sub-packetization is reduced by a factor of $\frac{h}{d-k}+1$.

Next we illustrate that the Hadamard MSR code enables $(h,d)$-optimal repair.
Let ${\rm Ham}(2,w)$ denote an $[N=2^w-1,K=2^w-1-w,3]$ Hamming code over $\mathbb{F}_2$.
Denote $V_0={\rm Ham}(2,w)$ and for $i\in[N]$, define 
\begin{equation}\label{V}
V_i=\{\bm{y}(i,y_i\oplus 1):~\bm{y}=(y_1,y_2,\ldots,y_N)\in V_0\},
\end{equation}
where ``$\oplus$'' means addition modulo 2. Since Hamming code is a perfect code and $V_0$ has minimum distance 3, one has the following Lemma.
\begin{lemma}\cite{Liu2023}\label{hamming}
$V_0,V_1,\ldots, V_N$ form a partition of the space $\mathbb{F}_2^N$, and $|V_i|=2^K=\frac{2^N}{N+1}$ for $i\in[0,N]$.
\end{lemma}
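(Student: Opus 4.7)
The plan is to recognize that each $V_i$ for $i\in[N]$ is simply the coset $V_0+e_i$, where $e_i\in\mathbb{F}_2^N$ denotes the $i$-th standard basis vector. Indeed, by definition $V_i=\{\bm{y}+e_i:\bm{y}\in V_0\}$, and since $V_0$ is a linear subspace of $\mathbb{F}_2^N$, every $V_i$ with $i\in[N]$ is an affine coset of $V_0$ of cardinality $|V_0|=2^K$. So the size claim $|V_i|=2^K=\frac{2^N}{N+1}$ is immediate once we verify $2^{N-K}=2^w=N+1$.

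For the partition claim, I would count cosets and show the listed ones are pairwise distinct. The index of $V_0$ in $\mathbb{F}_2^N$ is $2^{N-K}=2^w=N+1$, so there are exactly $N+1$ cosets of $V_0$. To see that $V_0,V_1,\ldots,V_N$ are $N+1$ distinct cosets, I would use that $V_0$ has minimum distance $3$:
\begin{itemize}
\item If $V_i=V_0$ for some $i\in[N]$, then $e_i\in V_0$, but $\mathrm{wt}(e_i)=1<3$, contradicting $d_{\min}(V_0)=3$.
\item If $V_i=V_j$ for distinct $i,j\in[N]$, then $e_i+e_j\in V_0$, but $\mathrm{wt}(e_i+e_j)=2<3$, again a contradiction.
\end{itemize}
Hence $V_0,V_1,\ldots,V_N$ are $N+1$ pairwise distinct cosets, and by the coset count they exhaust all cosets of $V_0$. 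Since cosets of a subgroup always partition the ambient group, this finishes the partition claim.

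There is essentially no obstacle here beyond invoking the Hamming code's minimum distance, and the conceptual content is exactly the perfectness of $\mathrm{Ham}(2,w)$: the vectors of weight at most $1$ form a complete set of coset representatives of $V_0$ in $\mathbb{F}_2^N$. I would therefore write the proof as a short three-line argument: (i) identify $V_i$ with the coset $V_0+e_i$; (ii) use $d_{\min}(V_0)=3$ to rule out coincidences $V_i=V_j$; (iii) match the count $N+1$ of listed cosets with the total number $[\mathbb{F}_2^N:V_0]=2^w=N+1$ of cosets.
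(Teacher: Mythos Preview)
Your proposal is correct and matches the paper's own treatment: the paper does not give a full proof but simply cites the lemma from \cite{Liu2023}, prefacing it with the remark that it follows because the Hamming code is perfect and has minimum distance $3$. Your coset argument is precisely the unpacking of that remark---perfectness gives the count $[\mathbb{F}_2^N:V_0]=N+1$, and $d_{\min}=3$ gives the distinctness of the $N+1$ cosets $V_0, V_0+e_1,\ldots,V_0+e_N$.
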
 

Now suppose $h$ nodes indexed by $\mathcal{H}$ are failed, W.L.O.G, we can assume $\mathcal{H}=[h]$ and the helper node set is $\mathcal{R}\subseteq[n]\setminus[h]$ with $|\mathcal{R}|=d$. For simplicity, we denote $\delta={\rm gcd}(h,d-k)=d-k$ since $(d-k)\mid h$ here.
Partition the failed set $\mathcal{H}$ into $\frac{h}{\delta}$ disjoint subsets $P_1,\ldots, P_{h/\delta}$ with each $|P_i|=\delta$, $i\in[\frac{h}{\delta}]$.
Say, set $P_i=[(i-1)\delta+1, i\delta]$ for $i\in[\frac{h}{\delta}]$.
Since $(\frac{h}{\delta}+1) \mid 2^n$, then there exists some $w$ s.t. $\frac{h}{\delta}=2^w-1$.
Choose $V_0={\rm Ham}(2,w)$ to be the $[\frac{h}{\delta}, \frac{h}{\delta}-w, 3]$ Hamming code over $\mathbb{F}_2$, and $V_i, i\in[\frac{h}{\delta}]$ are defined as in (\ref{V}).
Denote $M=\{\delta, 2\delta, 3\delta, \ldots, h\}$ and $|M|=\frac{h}{\delta}$. Then for each $i\in\frac{h}{\delta}$, it has $P_i\cap M=\{i\delta\}$. The repair process is illustrated in the following Theorem.
\begin{theorem}\label{thm-hadamard}
The failed nodes in $\mathcal{H}$ can be repaired in a centralized way by downloading $\frac{h\ell}{d-k+h}$ symbols
$$\big\{
\sum_{v=0,1}c_{j,a(P_i, a|_{P_i}\oplus v\bm{1}_{\delta})}: ~a\in[0,\ell-1] ~{\rm with}~ a|_{M}\in V_0, ~i\in[\frac{h}{\delta}]
\big\}$$
from each helper node $j\in\mathcal{R}$. Thus the repair bandwidth is $\frac{dh\ell}{d-k+h}$ symbols, achieving the cut-set bound (\ref{cut-set-bound}).
\end{theorem}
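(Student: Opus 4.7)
The plan is to specialize the ``local centralization and group collaboration'' paradigm from the proof of Theorem \ref{thm3} to the binary Hadamard setting ($s_0=2$), and then to invoke the Hamming-coset partition of Lemma \ref{hamming} to patch the per-group recoveries into a full reconstruction of every failed node.

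The first step, performed in parallel for each $i\in[h/\delta]$ and each base index $a$ with $a|_M\in V_0$, is to sum the two parity check equations in (\ref{hadamard}) indexed by $a(P_i,a|_{P_i}\oplus v\bm{1}_\delta)$ for $v\in\{0,1\}$, reproducing the pattern that led to (\ref{pc-eq3}). Because the $n+\delta=d+r$ evaluation points $\{\lambda_{j,a_j\oplus v}:j\in P_i,\,v\in\{0,1\}\}\cup\{\lambda_{j,a_j}:j\notin P_i\}$ are pairwise distinct, the participating symbols form a $[d+r,d]$ GRS codeword; the $d$ downloaded helper sums pin down $d$ of its coordinates, and the remaining $2\delta+(h-\delta)+(n-h-d)=n-k=r$ coordinates can then be GRS-erasure-decoded in one shot. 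This yields simultaneously the $2\delta$ desired symbols $c_{j,a(P_i,a|_{P_i}\oplus v\bm{1}_\delta)}$ for $j\in P_i$, $v\in\{0,1\}$, and, as a byproduct, the $h-\delta$ cross-group sums $c_{j,a}+c_{j,a(P_i,a|_{P_i}\oplus\bm{1}_\delta)}$ for $j\in\mathcal{H}\setminus P_i$.

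The second step verifies that these recoveries together cover the full index set $[0,\ell-1]$ for each failed node. Directly, a node $j\in P_i$ obtains $c_{j,a'}$ whenever $a'|_M\in V_0\cup V_i$: the $v=0$ branch preserves $a'|_M=a|_M\in V_0$, while the $v=1$ branch flips exactly the single coordinate at position $i\delta=P_i\cap M$, so by the definition (\ref{V}) of $V_i$ it places $a'|_M$ in $V_i$. For a node $j\in P_{i'}$ with $i'\neq i$, the cross-group sum delivered by step one reads $c_{j,a}+c_{j,a''}$ with $a|_M\in V_0$ and $a''|_M\in V_i$; since $c_{j,a}$ is already known from node $j$'s own direct recovery inside group $P_{i'}$ (its $M$-restriction lies in $V_0$), subtraction reveals $c_{j,a''}$. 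As $(i,a)$ ranges over $([h/\delta]\setminus\{i'\})\times\{a:a|_M\in V_0\}$, the non-$M$ digits of $a''$ sweep $\{0,1\}^{n-h/\delta}$ via the bijection that flips $P_i\setminus M$, while $a''|_M$ sweeps $\bigcup_{i\neq i'}V_i$.

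Combining direct and cross-group recoveries, node $j\in P_{i'}$ reconstructs $c_{j,a'}$ for every $a'$ whose $M$-restriction lies in $V_0\cup V_{i'}\cup\bigcup_{i\neq i'}V_i=\mathbb{F}_2^{h/\delta}$, where the final equality invokes the partition property of Lemma \ref{hamming}; together with the freedom of $a'|_{[n]\setminus M}$, this covers all of $[0,\ell-1]$. The bandwidth count is $|V_0|\cdot 2^{n-h/\delta}\cdot(h/\delta)\cdot d=(2^w-1)2^{n-w}d=\frac{dh\ell}{d-k+h}$, matching the cut-set bound (\ref{cut-set-bound}). The main obstacle I expect is the index-chasing in step two: one must keep track of how the binary flips on $P_i$ interact with the Hamming coset structure on $M$, and see that the design choice $P_i\cap M=\{i\delta\}$ is exactly what turns a single-coordinate flip on $M$ into membership in $V_i$, which is the precise hook needed for the partition argument of Lemma \ref{hamming} to close the coverage.
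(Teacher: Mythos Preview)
Your proposal is correct and follows essentially the same approach as the paper's own proof: both sum the two parity-check equations at $a$ and $a(P_i,a|_{P_i}\oplus\bm{1}_\delta)$ to form a $[d+r,d]$ GRS codeword, decode the $2\delta$ in-group symbols together with the $h-\delta$ cross-group sums, and then invoke the Hamming-coset partition of Lemma~\ref{hamming} to show that the direct recoveries (covering $V_0\cup V_i$) together with the sum-subtractions (covering $\bigcup_{i'\neq i}V_{i'}$) exhaust all of $[0,\ell-1]$. Your explicit tracking of the bijection on the non-$M$ digits via the flip on $P_i\setminus M$ is a slightly more detailed version of the paper's set identity (\ref{H-seteq}), but the argument is structurally identical.
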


\begin{proof}
$\forall i\in[\frac{h}{\delta}]$, we first recover partial data of the failed nodes in group $P_i$.
Fix some $a\in[0,\ell-1]$ with $a|_{M}\in V_0$. For $t\in[r]$, summing up the parity check equations in (\ref{hadamard}) indexed by $a$ and $a(P_i, a|_{P_i}\oplus\bm{1}_{\delta})$, one can get $r$ equations as follows.
$$
\sum_{j\in P_i}\sum_{v=0,1}\lambda_{j,a_j\oplus v}^{t-1}c_{j,a(P_i, a|_{P_i}\oplus v\bm{1}_{\delta})} + 
\sum_{j\in [n]\setminus P_i}\lambda_{j,a_j}^{t-1}\big(\sum_{v=0,1}c_{j,a(P_i, a|_{P_i}\oplus v\bm{1}_{\delta})}\big)=0.
$$
This implies the $2\delta+n-\delta=d+r$ symbols
$$
\{c_{j,a(P_i, a|_{P_i}\oplus v\bm{1}_{\delta})}: ~j\in P_i, v\in\{0,1\}\}\cup\{\sum_{v=0,1}c_{j,a(P_i, a|_{P_i}\oplus v\bm{1}_{\delta})}:~j\in[n]\setminus P_i\}
$$
constitute a $[d+r, d]$ GRS codeword. When $a$ runs through all integers in $[0,\ell-1]$ with $a|_{M}\in V_0$, one can get $\frac{2^n}{h/\delta+1}$ such GRS codewords.
Thus, the following unknown symbols
\begin{equation}\label{H-unknown1}
\cup_{j\in P_i}\{c_{j,a}: ~a\in[0,\ell-1] ~{\rm with}~ a|_{M}\in V_0\cup V_i\}
\end{equation}
and
\begin{equation}\label{H-unknown2}
\cup_{j\in \mathcal{H}\setminus P_i}\big\{\sum_{v=0,1}c_{j,a(P_i, a|_{P_i}\oplus v\bm{1}_{\delta})}: ~a\in[0,\ell-1] ~{\rm with}~ a|_{M}\in V_0\big\}
\end{equation}
can be recovered from the downloaded data since
\begin{equation}\label{H-seteq}
\{a(P_i, a|_{P_i}\oplus v\bm{1}_{\delta}): ~a\in[0,\ell-1] ~{\rm with}~ a|_{M}\in V_0, ~v\in\{0,1\}\}=\{a\in[0,\ell-1]:  ~a|_{M}\in V_0\cup V_i\}.
\end{equation}

Next we claim that the data in (\ref{H-unknown1}) and (\ref{H-unknown2}) for $i\in[\frac{h}{\delta}]$ are enough to recover all the failed data.
$\forall i\in[\frac{h}{\delta}]$ and $\forall j\in P_i$, one has recovered some independent symbols indicated by (\ref{H-unknown1}) and some symbol sums indicated by (\ref{H-unknown2}) with $P_i$ replacing by $P_{i'}$ for all $i'\in[\frac{h}{\delta}]\setminus\{i\}$.
Similar to Lemma \ref{lemma2} and according to (\ref{H-seteq}), one can compute the data 
$$
\{c_{j,a}: ~a|_{M}\in V_{i'} ~{\rm for ~all}~ i'\in[\frac{h}{\delta}]\setminus\{i\}\}
$$
Then the proof is concluded by noticing that 
$$
[0,\ell-1]=\cup_{i\in[0,\frac{h}{\delta}]}\{a\in[0,\ell-1]: ~a|_{M}\in V_i\}
$$
according to the definition of $V_i$'s and Lemma \ref{hamming}.

\end{proof}

\section{Centralized MSR codes with $(h,d)$-optimal repair for all possible $h$ and $d$ simultaneously}\label{sec-code4}
We present a centralized MSR code with $(h_i,d_i)$-optimal repair for general $1\leq h_i\leq n-k$ and $k\leq d_i\leq n-h_i$, $i\in[m]$, simultaneously.
Denote $\delta_i={\rm gcd}(h_i, d_i-k)$ and $s_i=\frac{d_i-k+\delta_i}{\delta_i}$ for $i\in[m]$. Then $\frac{d_i-k+h_i}{\delta_i}=s_i+\frac{h_i}{\delta_i}-1$, $i\in[m]$.
W.L.O.G, we can assume that $s_1\leq s_2\leq\cdots\leq s_m$.
Denote $s={\rm lcm}(\frac{d_1-k+h_1}{\delta_1},\ldots,\frac{d_m-k+h_m}{\delta_m})={\rm lcm}(s_1+\frac{h_1}{\delta_1}-1,\ldots,s_m+\frac{h_m}{\delta_m}-1)$.
 Then the code $\mathcal{C}_4$ is an $(n,k,\ell=s\cdot s_m^n)$ MDS array code over a finite filed $F$. Using notations as in Section \ref{sec-code1}, the code $\mathcal{C}_4$ is defined similarly as in Construction \ref{construction1}.

We mainly illustrate the repair property of $\mathcal{C}_4$ in the following.

\begin{theorem}
$\mathcal{C}_4$ has $(h_i,d_i)$-optimal repair property for all $i\in[m]$.
\end{theorem}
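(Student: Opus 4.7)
My plan is to obtain the $(h_i,d_i)$-optimal repair property by combining the chunking technique used in the proof of Theorem \ref{thm2} with the two-step ``local centralization and group collaboration'' scheme developed in the proof of Theorem \ref{thm3}. Fix $i\in[m]$ and write $q_i=\frac{d_i-k+h_i}{\delta_i}=s_i+\frac{h_i}{\delta_i}-1$ and $u_i=s/q_i$ (an integer since $q_i\mid s$). The $b$-coordinate $[0,s-1]$ will be split into $u_i$ consecutive chunks of length $q_i$, indexed by $\mu\in[0,u_i-1]$; inside each chunk I will run the Theorem \ref{thm3} scheme tuned to the pair $(h_i,d_i)$.

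More precisely, suppose $\mathcal{H}_i\subseteq[n]$ with $|\mathcal{H}_i|=h_i$ is the failed set and $\mathcal{R}_i\subseteq[n]\setminus\mathcal{H}_i$ with $|\mathcal{R}_i|=d_i$ is the helper set. I would partition $\mathcal{H}_i$ into $h_i/\delta_i$ groups $P_1^{(i)},\ldots,P_{h_i/\delta_i}^{(i)}$ of size $\delta_i$ each, and for every $j\in[h_i/\delta_i]$ set $\Omega_j^{(i)}=\{0,1,\ldots,s_i-2,\, s_i-2+j\}\subseteq[0,q_i-1]$. For each $a\in[0,s_m^n-1]$, each $\mu\in[0,u_i-1]$, and each group $P_j^{(i)}$, I would sum over $v\in[0,s_i-2]$ the parity-check equations indexed by $\bigl(a(P_j^{(i)},a|_{P_j^{(i)}}\oplus v\bm{1}_{\delta_i}),\,\mu q_i+v\bigr)$, and add to them the equation indexed by $\bigl(a(P_j^{(i)},a|_{P_j^{(i)}}\oplus(s_i-1)\bm{1}_{\delta_i}),\,\mu q_i+s_i-2+j\bigr)$. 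Exactly as in equation (\ref{pc-eq3}), this produces, for each $t\in[r]$, one linear relation whose coefficients are $\{\lambda_{e,a_e\oplus v}:e\in P_j^{(i)},v\in[0,s_i-1]\}$ on the failed-node symbols and $\{\lambda_{e,a_e}:e\notin P_j^{(i)}\}$ on the symbol sums over helper nodes. Since all $\lambda_{e,v}$ are distinct and there are $\delta_i s_i+(n-\delta_i)=d_i+r$ of them, these form a $[d_i+r,d_i]$ GRS codeword, and downloading one sum per helper node recovers the failed symbols in that group within the chunk, mirroring Lemma \ref{lemma1}.

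Next I would collect these recoveries across the $h_i/\delta_i$ groups inside a fixed chunk $\mu$: as in Lemma \ref{lemma2}, each failed node $e\in P_j^{(i)}$ has obtained its symbols for $b\in\mu q_i+\Omega_j^{(i)}$ directly, and has also obtained, for every other group $P_{j'}^{(i)}$, a sum of the form $c_{e,(\,\cdot\,,\mu q_i+s_i-2+j')}+\sum_{v=0}^{s_i-2}c_{e,(\,\cdot\,,\mu q_i+v)}$ in which the underlined tail is already known from the independent symbols; subtracting yields the remaining coordinates and covers the full index set $\{(a,\mu q_i+b'):a\in[0,s_m^n-1],\,b'\in[0,q_i-1]\}$. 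Letting $\mu$ run through $[0,u_i-1]$ then exhausts $b\in[0,s-1]$ and therefore the whole coordinate range $[0,\ell-1]$. The bandwidth counts as $d_i\cdot s_m^n\cdot u_i\cdot(h_i/\delta_i)=\frac{d_ih_i\ell}{d_i-k+h_i}$, meeting the cut-set bound (\ref{cut-set-bound}).

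The main obstacle I anticipate is purely notational: one must check that the chunked index set $\{(a(P_j^{(i)},a|_{P_j^{(i)}}\oplus v\bm{1}_{\delta_i}),\,\mu q_i+v):a,\mu,v,j\}$ really partitions (or at least covers) $[0,\ell-1]$ in the desired way, and that the coefficient set $\{\lambda_{e,a_e\oplus v}:e\in P_j^{(i)},v\in[0,s_i-1]\}\cup\{\lambda_{e,a_e}:e\notin P_j^{(i)}\}$ is always of size $d_i+r$ with pairwise distinct entries, despite the $\oplus$ being modulo $s_m$ rather than modulo $s_i$. Both points are inherited from the corresponding verifications in Theorems \ref{thm2} and \ref{thm3}: the distinctness holds because we never use the same digit $e$ twice and the $\lambda_{e,v}$ are pairwise distinct across all $(e,v)$; and the index covering holds because within each chunk the Theorem \ref{thm3} argument applies verbatim. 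Once these are confirmed, the claim follows simultaneously for every $i\in[m]$ since the construction of $\mathcal{C}_4$ does not depend on $i$.
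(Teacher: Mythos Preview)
Your proposal is correct and follows essentially the same approach as the paper's own (sketch) proof: partition $[0,s-1]$ into $u_i$ consecutive chunks of length $q_i=s_i+\frac{h_i}{\delta_i}-1$ and apply the Theorem~\ref{thm3} repair scheme verbatim inside each chunk. Your write-up is in fact more detailed than the paper's sketch, and the two concerns you flag (distinctness of the $\lambda$'s under $\oplus$ modulo $s_m$, and the index-covering after the bijection $a\mapsto a(P_j^{(i)},a|_{P_j^{(i)}}\oplus v\bm{1}_{\delta_i})$) are exactly the right ones to check and are resolved just as you indicate.
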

\begin{proof}
We give a sketch of the proof.
Given $i\in[m]$, we prove $\mathcal{C}_4$ has $(h_i,d_i)$-optimal repair property. For simplicity, denote $u_i=s/(s_i+\frac{h_i}{\delta_i}-1)$. We partition the set $[0,s-1]$ into $u_i$ disjoint subsets: for $\mu\in[u_i]$, define
$$\Gamma_\mu=\big\{(\mu-1)(s_i+\frac{h_i}{\delta_i}-1),(\mu-1)(s_i+\frac{h_i}{\delta_i}-1)+1,\ldots, (\mu-1)(s_i+\frac{h_i}{\delta_i}-1))+s_i+\frac{h_i}{\delta_i}-2\big\}\subseteq[0,s-1].$$
Then $|\Gamma_\mu|=s_i+\frac{h_i}{\delta_i}-1$ for $\mu\in[u_i]$.
For a codeword $\bm{c}=(\bm{c}_1,\ldots,\bm{c}_n)\in\mathcal{C}_4$ where each $\bm{c}_j=(c_{j,0},\ldots, c_{j,\ell-1})\in F^\ell$, we write $\bm{c}_j=(c_{j, (a,b)})_{a\in[0,s_m^n-1], b\in[0,s-1]}$ for $j\in[n]$.
For each $\mu\in[u_i]$, we apply the repair scheme in Section \ref{sec-code3} to $\bm{c}$ with each $\bm{c}_j$ restricted to the digits $(a,b)\in[0,s_m^n-1]\times \Gamma_\mu$.
Then one can verify that $\bm{c}$ enables optimal repair of any $h_i$ nodes using any $d_i$ helper nodes with repair bandwidth achieving the cut-set bound (\ref{cut-set-bound}).
\end{proof}

The following corollary can be directly obtained.
\begin{corollary}\label{cor2}
There is an $(n,k,\ell={\rm lcm}(1,2,\ldots,n-k)(n-k)^n)$ MDS array code satisfying $(h_i,d_i)$-optimal repair property for all possible $1\leq h_i\leq n-k$ and $k\leq d_i\leq n-h_i$ simultaneously.
\end{corollary}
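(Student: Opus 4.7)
The plan is to invoke the theorem immediately preceding Corollary~\ref{cor2} with the complete list of admissible pairs $(h_i,d_i)$, and then verify that the resulting sub-packetization $\ell=s\cdot s_m^n$ simplifies exactly to ${\rm lcm}(1,2,\ldots,n-k)\cdot(n-k)^n$.

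First, I would set $\{(h_i,d_i)\}_{i\in[m]}$ to be the full enumeration of integer pairs with $1\leq h_i\leq n-k$ and $k\leq d_i\leq n-h_i$. The preceding theorem applied to this collection produces an $(n,k,\ell)$ MDS array code $\mathcal{C}_4$ that simultaneously possesses the $(h_i,d_i)$-optimal repair property for every $i\in[m]$, and this list is exactly the parameter range required by the corollary. Hence only the value of $\ell$ needs to be checked.

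Next, I would bound the two factors $s_m$ and $s$ separately. For $s_m=\max_i s_i$: since $\delta_i\geq 1$ and any admissible pair has $d_i\leq n-h_i\leq n-1$,
\[
s_i=\frac{d_i-k+\delta_i}{\delta_i}\leq d_i-k+1\leq n-k,
\]
with equality attained at $(h_i,d_i)=(1,n-1)$ (where $\delta_i=1$), giving $s_m=n-k$. For $s={\rm lcm}_i\bigl((d_i-k+h_i)/\delta_i\bigr)$: the constraint $d_i+h_i\leq n$ combined with $\delta_i\geq 1$ forces each lcm-argument into $[1,n-k]$, so $s$ divides ${\rm lcm}(1,2,\ldots,n-k)$. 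Conversely, restricting attention to the subfamily with $h_i=1$ (so $\delta_i=1$), the quantity $d_i-k+1$ runs through the entire set $\{1,2,\ldots,n-k\}$ as $d_i$ ranges over $\{k,k+1,\ldots,n-1\}$; hence every integer in $[1,n-k]$ appears as an lcm-argument and $s={\rm lcm}(1,2,\ldots,n-k)$.

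Combining the two computations yields $\ell=s\cdot s_m^n={\rm lcm}(1,2,\ldots,n-k)\cdot(n-k)^n$, as claimed. I do not anticipate any genuine obstacle: the substantive content, namely the existence of a single MDS array code supporting optimal centralized repair for all admissible $(h,d)$ at once, is already packaged in the preceding theorem, and Corollary~\ref{cor2} amounts to the arithmetic observation that both factors in the sub-packetization formula are controlled by $n-k$ when the full parameter range is used.
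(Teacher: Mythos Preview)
Your proposal is correct and matches the paper's approach: the paper simply states that the corollary ``can be directly obtained'' from the preceding theorem, and your argument supplies exactly the missing arithmetic---that over the full admissible family $s_m=n-k$ (witnessed by $(h,d)=(1,n-1)$) and $s={\rm lcm}(1,\ldots,n-k)$ (each $(d_i-k+h_i)/\delta_i$ lies in $[1,n-k]$, with every value in that range hit by the $h_i=1$ subfamily). There is nothing to add.
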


\appendices

\end{document}